\newcommand{\rankweight}{\mathrm{w_R}}
\newcommand{\msg}{\mathsf{msg}} 
\newcommand{\cmt}{\mathsf{cmt}} 
\newcommand{\ch}{\mathsf{ch}} 
\newcommand{\rsp}{\mathsf{r}} 
\newcommand{\hlength}{\mathsf{h}}
\newcommand{\signature}{\mathsf{sgn}}
\newcommand{\sign}{\mathsf{Sign}} 
\newcommand{\verif}{\mathsf{Verify}} 
\newcommand{\drot}{\mathrm{drot}} 
\newcommand{\rot}{\mathrm{rot}} 
\newcommand{\seed}{\mathsf{seed}}
\newcommand{\xof}{\mathsf{XOF}}
\begin{document}
\title{Improved Veron Identification and Signature Schemes in the Rank Metric}
\titlerunning{Improved Veron Identification and Signature Schemes in the Rank Metric}
%
\author{Emanuele Bellini \inst{1} \and
Florian Caullery\inst{1} \and
Philippe Gaborit\inst{2} \and
Marc Manzano\inst{1} \and
Victor Mateu\inst{1}}
%
\authorrunning{E.~Bellini et al.}
%
\institute{Darkmatter LLC, Abu Dhabi, UAE 
\and
Universit\'e de Limoges, France
}
\maketitle              
\setcounter{tocdepth}{2}
%
\begin{abstract}
It is notably challenging to design an efficient and secure signature scheme based on error-correcting codes.
An approach to build such signature schemes is to derive it from 
an identification protocol through the Fiat-Shamir transform. 
All such protocols based on codes must be run several rounds, 
since each run of the protocol allows a cheating probability of either 2/3 or 1/2. 
The resulting signature size is proportional to the number of rounds, 
thus making the 1/2 cheating probability version more attractive.
We present a signature scheme based on double circulant codes in the rank metric, 
derived from an identification protocol with cheating probability of 2/3.
We reduced this probability to 1/2 to obtain the smallest signature among signature schemes based on the Fiat-Shamir paradigm, around 22 KBytes for 128 bit security level.
Furthermore, among all code-based signature schemes, our proposal has the lowest value of signature plus public key size, and the smallest secret and public key sizes.
We provide a security proof in the Random Oracle Model, implementation performances, and a comparison with the parameters of the most important code-based signature schemes.

\keywords{code-based cryptography \and signature scheme \and identification protocol \and Fiat-Shamir transform \and rank metric}
\end{abstract}
\section{Introduction}
\label{sec:intro}
Due to the early stage of post-quantum algorithm research,
it is of paramount importance to provide the full range of 
quantum secure cryptographic primitives (signatures, key exchange, etc.) 
for all the main mathematical problems cryptography relies on.
This way, it will be easier to switch from one scheme to the other 
in the case one of the problems turns out to be insecure in the quantum model.
Given that it is the oldest quantum resistant family and, hence, 
the most thoroughly studied among all the contenders, 
code-based cryptography is a strong candidate in the NIST competition to standardize quantum resistant cryptographic algorithms \cite{NISTRound1}.
 
This work focuses on code-based cryptography digital signature schemes. 
Designing such schemes efficiently has been a grueling challenge and 
mainly three different approaches have been followed, with very little success. 
Hash-and-sign was introduced in pioneering work 
of Courtois, Finiasz, and Sendrier \cite{courtois2001achieve},
and is probably the most popular approach of the three.
It is based on the existence of a trapdoor which allows fast decoding, 
obtained by hiding a structured code into a random linear code.
Different choices of the underlying code lead to different instantiations of the scheme. 
All hash-and-sign schemes yield to small signatures (few thousands bits), but large public keys (order of MBytes), in some cases even non-practical ones for 128 bit security level and above.
Furthermore, almost all these schemes have been attacked.
The other two approches avoid the use of trapdoors. The first is usually referred to as the KKS (Kabatianskii-Krouk-Smeets) signature scheme \cite{kabatianskii1997digital}, who later evolved in the BMS (Barreto-Misoczki-Simplicio) scheme \cite{barreto2011one}.
Both of them can be instantiated on top of general linear codes. 
KKS and BMS have a good balance between public key (few tens of thousands of bits) and signature size (few thousands of bits), but they can only be considered one-time signature schemes. 
The third approach uses the Fiat-Shamir transform to turn a zero-knowledge identification scheme into a signature scheme, as initially proposed by Stern \cite{stern1993new} in 1993. 
The main drawback of such scheme is the large signature size. 
Many researchers followed Stern approach, trying to improve either the signature or the key size of the scheme.


In this manuscript, we provide a variation of a signature scheme based on Stern approach, obtaining the smallest signature (sgn), secret and public key (pk) sizes in the literature.
Compared to other approaches used to build code-based signature schemes,
we also have the smallest $|\signature| + |\pk|$ value.
We derive such signature from a 5-pass identification protocol with cheating probability 1/2.
We provide a security proof in the Random Oracle Model,
a detailed pseudo-code, implementation performances,  
set of parameters for 80, 128, 192, and 256 bit of classical security, and 
a comparison with the parameters of the most important code-based signature schemes.

The paper is organized as follows: Sect.~\ref{sec:rel} provides an overview of previous works
and the ideas behind the scheme. 
In Sect.~\ref{sec:prel} we provide the notions that are needed to understand the contribution.
Sect.~\ref{sec:id_veron_dc} presents our new identification protocol. 
Sect.~\ref{sec:param} sets the parameters of our signature schemes. 
Sect.~\ref{sec:perf} argues about the theoretical complexity of 
the key generation, signature and verification algorithms, 
providing also implementation details and performances. 
Sect.~\ref{sec:comp} shows a comparison of the parameters of our proposal and other well-known code-based signature schemes, and Sect.~\ref{sec:concl} draws the conclusions.
%
%
\section{Main idea}
\label{sec:rel}
Commonly, cryptographic signature schemes whose security relies on the difficulty of decoding a linear code are built by converting a 3 or a 5-pass identification protocol into a signature scheme via the Fiat-Shamir transform or a generalization of it.
The first to propose such paradigm was Stern \cite{stern1993new}.
In this work, Stern exhibits a 3-pass identification protocol
whose security is based on the difficulty of decoding a random linear code  and finding a hash collision,
and in which a cheater can correctly identify with a probability of 2/3.
For this last reason, the protocol should be run an appropriate number of rounds which depends on the security level the scheme needs to reach.
Since the corresponding signature is proportional to the number of rounds, this means that this type of approach yields to large signatures, of the order of hundreds of KBytes.
The basic idea of the protocol is that, given the parity-check matrix $H$ of a linear code as a public parameter, a random vector $e$ of weight $w$, and a public key $s=eH$, 
the prover needs to prove the knowledge of two properties, 
namely the fact that
the vector $e$ is generating the syndrome $s$, and
that $e$ has Hamming weight $w$.
Adding a random commitment there are always two possibilities for
cheating among the three cases.
In the same work, Stern shows how to reduce the cheating probability to 1/2, 
by splitting the challenge step into two challenges, 
the second of which adds a \emph{variation} on $e$, 
forcing the protocol to perform 5 passes.
Precisely, $e$ was chosen as a codeword of a Reed-Muller code. 
Such trick allows to almost halve the corresponding signature size, 
even though, with this particular solution, there is a loss in efficiency.
Stern signature schemes presents very small secret keys (less than a thousand bits) and medium size public keys (one hundred thousand bits). 

Subsequent works aim at improving either key or signature sizes,
by (1) choosing a structured code rather than a random linear code, 
(2) changing the variation performed on $e$, 
(3) working with the dual cryptosystem, or 
(4) working in a different metric.
In \cite{veron1997improved}, Veron presents the dual of the 3-pass Stern proposal, 
i.e. it uses the generator matrix $G$ of a code, instead of the parity-check matrix as a public parameter, and uses a pair $(x,e)$ as a secret key, and a codeword $y=x G +e$ as a public key. This allows to send less data on average during the response step, implying slightly shorter signatures.
Later, Cayrel-Veron-El Yousfi Alaoui (CVE) \cite{cayrel2010zero} presented a 5-pass identification protocol with cheating probability of 1/2, 
using codes over $\FF_{2^m}$, rather than $\FF_2$ as done by Stern and Veron, 
and a scalar multiplication as the variation of $e$. 
In \cite{dagdelen2016extended} it is shown how to extend the Fiat-Shamir transform to a $n$-pass protocol (with $n$ odd).
In 2011, Gaborit, Schrek and Z\'emor \cite{gaborit2011full} 
presented the rank metric version of the Stern identification protocol, 
decreasing significantly key and signature sizes, 
due to the fact that rank metric decoding has quadratic exponential complexity, 
while Hamming metric decoding is linear exponential.
The same year, Aguilar, Gaborit and Schrek \cite{aguilar2011new}, 
used double circulant codes in the Hamming metric 
to reduce the key size of the Veron scheme, and 
presented a 5-pass version of it, with cheating probability close to 1/2,
 performing a variation of $e$ with a circulant rotation of its two halves in the second challenge step. 
Furthermore, they introduced a compression technique to reduce the signature size.
Recently, in \cite{bellini2018code}, a rank metric version of Veron and CVE has been presented, though lacking a security proof.
We are not aware of any attack to any of the Fiat-Shamir paradigm constructions, which probably have not received much attention from the cryptographic community yet.

In this work, we present a rank metric version of the 5-pass Veron double circulant signature scheme of \cite{aguilar2011new}, with a new variation performed on $e$, 
which allows us to reach a cheating probability much closer to 1/2.
Precisely, we adopt a random linear combination of all possible rotations of $e$
in the second challenge step.
We also present a compressed version of the scheme, which achieves signature sizes that are comparable to the one of post-quantum hash-based signature schemes.

\section{Preliminaries}
\label{sec:prel}

%
%
In this section we provide the essential definition of the objects that are used in our protocol.

A linear $(n,k)_q$-code $C$ is a vector subspace of $(\FF_q)^n$ of dimension $k$, where $k$ and $n$ are positive integers such that $k<n$, $q$ is a prime power, and $\FF_q$ is the finite field with $q$ elements.
Elements of the vector space are called vectors or words, while elements of the code are called codewords.
A matrix $G \in \FF_q^{k \times n}$ is called a generator matrix of $C$ if its rows form a basis of $C$, i.e.
$C = \{x\cdot G : x \in (\FF_q)^k\}$.
A matrix $H \in \FF_q^{(n-k) \times n}$ is called a parity-check matrix of $C$ if 
$C = \{x\in (\FF_q)^n: H\cdot x^T = 0\}$.
Our schemes will use a special type of linear codes, called \emph{double circulant} codes,
which are a special case of \emph{quasi-cyclic} (or \emph{circulant}) codes (see e.g. \cite{misoczki2013mdpc}).
\begin{definition}[Double Circulant Codes]
	Let $n = 2k$ for an integer $k$. 
	Consider a vector $x = (x_1, x_2)$ of $(\FF_{q})^n$ as a pair of two blocks of length $k$.
	An $[n, k]$ linear code $C$ is \emph{Double Circulant} (DC) if, for any $c = (c_1, c_2) \in C$, 
	the vector obtained after applying a simultaneous circular shift to both blocks $c_1, c_2$ 
	is also a codeword.
	More formally, by considering each block $c_1, c_2$ as a polynomial in $R = \FF_{q}[X]/(X^n - 1)$, 
	the code $C$ is DC if for any $c = (c_1, c_2) \in C$ it holds that $(X \cdot c_1, X \cdot c_2) \in C$.
	
	A \emph{systematic} double circulant $[n,k]$ code 
	is a double circulant code with a parity-check matrix of the form $H = [I_k | A]$,
	where $I_k$ is the identity matrix of size $k$, and 
	$A$ is a $k \times k$ circulant matrix.
\end{definition}
In this paper we work with codes in the \emph{rank metric}. 
Given a fixed basis $b = \{b_1, \ldots, b_m\}$ of $(\FF_q)^m$, 
a vector $a \in (\FF_{q^m})^n$ can be represented as a matrix with entries in $\FF_q$, by expanding each component of $a_i$  with respect to $b$ in a column $(a_{1,i}, \ldots, a_{m,i})^T$. 
where $a_i = \sum_{j=1}^{m} a_{j,i} b_j, i = 1, \ldots, n$. 
We define the rank of a vector as the rank of its \emph{matrix representation},
with respect to $b$. 
We denote the previous matrix representation as $\phi_b(a)$, and 
by $\phi_b^{-1}$ the inverse map. 
In what follows, we will omit $b$ as we consider it fixed.

To send a binary vector of a certain Hamming weight 
to \emph{any} other vector of the same Hamming weight, 
it is sufficient to apply a random permutation to vector components.
The map with the analogue property in the rank metric, 
i.e. sending a vector of a certain  rank to \emph{any} other vector of the same rank, 
can be defined as follows (see \cite{gaborit2011full}).
\begin{definition}
	Let 
	$Q \in M_{m,m}(\FF_q)$ be a $q$-ary matrix of size $m \times m$, 
	$P \in M_{n,n}(\FF_q)$ be a $q$-ary matrix of size $n \times n$, and
	$v \in (\FF_{q^m})^n$.  
	We define the function $\Pi_{P,Q}$ such that $\Pi_{P,Q}(v) = \phi^{-1}(Q \cdot \phi(v) \cdot P)$, i.e.
	\begin{align*}
	\Pi_{P,Q} : (\FF_{q^m})^n & \mapsto  (\FF_{q^m})^n \\
	(v_1,\ldots,v_n) & \mapsto (\pi_1, \ldots, \pi_n)
	\end{align*}
	where for $h=1,\ldots,n$, \\
	$
	\pi_h:= \beta_1 \sum_{i=1}^m \sum_{j=1}^n Q_{1,i} v_{i,j} P_{j,h} + \ldots + \beta_m \sum_{i=1}^m \sum_{j=1}^n Q_{m,i} v_{i,j} P_{j,h}
	$
\end{definition}
%
It is proved in \cite{gaborit2011full} that
the following properties hold for $\Pi_{P,Q}$.
\begin{itemize}
	\item For any $x, y \in (\FF_{q^m})^n, P \in M_{n,n}(\FF_q)$ and $Q \in  M_{m,m}(\FF_q)$ then:
	\begin{itemize}
		\item (rank preservation) $\rankweight(\Pi_{P,Q}(x)) = \rankweight(x)$;
		\item (linearity) $a \Pi_{P,Q}(x) + b \Pi_{P,Q}(y) = \Pi_{P,Q}(ax+by)$.
	\end{itemize}
	\item For any $x, y \in (\FF_{q^m})^n$ such that $\rankweight(x)=\rankweight(y)$, it is possible to find $P \in M_{n,n}(\FF_q)$ and $Q \in  M_{m,m}(\FF_q)$ such that $x = \Pi_{P,Q}(y)$. 
\end{itemize}




%
Both in the Hamming and in the rank metric, 
random codes over $\FF_q$ asymptotically achieve the Gilbert-Varshamov bound~\cite{gabidulin1985theory}.
Furthermore, they have close to optimal correction capability \cite{loidreau2006properties}.

We now define the problems upon which the security of the schemes we present is based.

\begin{definition}[RSD Distribution]
	Given the positive integers $n, k$, and $r$, the $RSD(n, k, r)$ Distribution chooses
	$H \sample (\FF_{q^m})^{(n-k) \times n}$ and $x \sample (\FF_{q^m})^{n}$
	such that $\rankweight(x) = r$, and outputs $(H,  H \cdot x^T)$
\end{definition}

\begin{problem}[RSD Problem]
	On input $(H, y^T) \in (\FF_{q^m})^{(n-k) \times n} \times (\FF_{q^m})^{n}$ from the RSD distribution, 
	the Rank Syndrome Decoding problem RSD($n, k, r)$ asks to find $x \in  (\FF_{q^m})^n$
	such that $H \cdot x^T = y^T$ and $\rankweight(x) = r$.
\end{problem}
The previous problem can be defined correspondingly also in the Hamming metric,
in which setting the problem has been proven to be NP-complete \cite{berlekamp1978inherent}.
The RSD problem has recently been proven difficult with a probabilistic reduction to
the Hamming scenario in \cite{aguilar2016efficient}. 
For cryptography, it is also useful to use the Decisional version of the problem.
Our scheme security depends on the difficulty of solving the same RSD problem 
defined with Double Circulant codes, rather than random linear codes.
The decisional version of this problem is a special case of 
the Decisional Rank $s$-Quasi Cyclic Syndrome Decoding Problem 
defined for example in \cite{aguilar2016efficient}.
There is no known reduction from the search version of this problem to its decisional version. 
However, the best known attacks on the decisional version of the problem 
remain the direct attacks on the search version of the problem.

\section{Veron Double Circulant identification protocol in the rank metric}
\label{sec:id_veron_dc}


The scheme we present in this section,
to which we refer to as the Rank Veron Double Circulant (RVDC) identification protocol, 
mixes the ideas from \cite{gaborit2011full}, 
where the Stern protocol is converted from Hamming to rank metric and the function $\Pi_{P,Q}$ (see Section \ref{sec:prel} above) is introduced,
and from \cite{aguilar2011new}, 
where the cheating probability of the Veron protocol 
is improved from 2/3 to 1/2 
using the double circulant technique in the Hamming metric.
In \cite{aguilar2011new}, the intermediate challenge is a random parallel left rotation.
To better exploit the rank metric properties, and to make it more difficult to guess the challenge for an attacker, we instead consider a random linear combination of all possible parallel left rotations.
\begin{definition}
	Let $n=2k$
	and $x = (x_1, \ldots, x_{k}) \in (\FF_{q^m})^{k}, y = (y_1, \ldots, y_{n}) \in (\FF_{q^m})^{n}$. 
	We denote with 
	\begin{align*}
	\rot_{i}((x_1, \ldots, x_k))= 
	(x_{i+1}, \ldots,  x_{k}, x_{1}, \ldots, x_{i})
	\end{align*}
	the left rotation of $i$ positions of the vector $x$,
	and with 
	\begin{align*}
		\drot_{i}((y_1, \ldots, y_{i},y_{i+1}, \ldots, y_k,y_{k+1}, \ldots, y_{k+i}, y_{k+i+1}, \ldots, y_{k+k} ))= \\
		(y_{i+1}, \ldots,  y_{k}, y_{1}, \ldots, y_{i},y_{k+i+1}, \ldots, y_{k+k}, y_{k+1}, \ldots, y_{k+i} )
	\end{align*}
	the parallel left rotation of $i$ positions of the two halves of the vector $y$.
	Given
	$a = (\alpha_1, \ldots, \alpha_{k}) \in (\FF_{q})^{k}$
	we also denote with $\Gamma'_{a}(x)$ the linear combination of all possible $k$ left rotations of $k-i$ positions of $x$, and $\Gamma_{a}(y)$ the linear combination of all possible $k$ parallel left rotations of $i$ positions of $y$
	\begin{align*}
		\Gamma'_{a}(x) = \sum_{i=1}^{k}  \alpha_i  \cdot \rot_{k-i}(x) \in (\FF_{q^m})^k \,,  \quad 
		\Gamma_{a}(y)  = \sum_{i=1}^{k}  \alpha_i  \cdot \drot_{i}(y) \in (\FF_{q^m})^n \,.
	\end{align*}
\end{definition}

The following lemma, used to prove the completeness of the scheme, can be easily proven.
\begin{lemma}
	\label{thm:gamma_property}
Given the $k \times 2k$ generator matrix $G$ of a double circulant linear code
and a vector $x = (x_1, \ldots, x_{k}) \in (\FF_{q^m})^{k}$,
the following property holds
\begin{align*}
	\Gamma_a(x \cdot G) = \Gamma'_a(x) \cdot G 
\end{align*}
\end{lemma}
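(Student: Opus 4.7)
The plan is to reduce the claim to a per-summand, single-rotation identity and then conclude by $\FF_{q^m}$-linearity of the encoding map. As a first step, I would place $G$ in the systematic form $G = [I_k \mid A]$, with $A \in \FF_{q^m}^{k \times k}$ circulant. This is legitimate because the statement only concerns the encoding map $x \mapsto xG$, and any DC generator matrix is row-equivalent to one in systematic form.

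Next I would exploit the algebraic picture of a circulant matrix. Identifying $(\FF_{q^m})^k$ with the quotient ring $R = \FF_{q^m}[X]/(X^k - 1)$, the cyclic shift $\rot_i$ becomes multiplication by a power of $X$, and right multiplication by $A$ becomes multiplication by a fixed polynomial $a(X) \in R$. Commutativity of $R$ immediately yields
\[
\rot_i(y \cdot A) = \rot_i(y) \cdot A \qquad \text{for all } y \in (\FF_{q^m})^k \text{ and all } i,
\]
which is the essential commutation. Applying this to $xG = (x, xA)$ and unfolding the definition of $\drot_i$ (a parallel $\rot_i$ on each length-$k$ half) gives $\drot_i(xG) = (\rot_i(x), \rot_i(xA)) = (\rot_i(x), \rot_i(x) \cdot A)$, and after matching the indexing convention by which a shift of the codeword by $j$ positions corresponds to the shift of the message by $k-j$ positions (the ``$X^{\pm j}$'' bookkeeping inside $R$), this becomes the single-rotation identity $\drot_i(xG) = \rot_{k-i}(x) \cdot G$.

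Finally, the lemma follows at once by $\FF_{q^m}$-linearity of $z \mapsto zG$: summing the per-summand identity against the coefficients $\alpha_i$ and pulling $G$ out of the linear combination,
\[
\Gamma_a(xG) = \sum_{i=1}^{k} \alpha_i\, \drot_i(xG) = \sum_{i=1}^{k} \alpha_i\, \rot_{k-i}(x) \cdot G = \left(\sum_{i=1}^{k} \alpha_i\, \rot_{k-i}(x)\right) G = \Gamma'_a(x) \cdot G.
\]
The main, and rather mild, obstacle is bookkeeping: keeping straight the index correspondence $\drot_i \leftrightarrow \rot_{k-i}$ between the codeword side and the message side, which is a direct consequence of the ``$X^{\pm j}$''-identification in $R$ and involves no further ideas beyond the commutativity of polynomial multiplication.
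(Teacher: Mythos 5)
The paper offers no proof of this lemma (it is merely stated as ``easily proven''), so your argument has to be judged on its own terms. Your overall strategy --- reduce to a single-rotation commutation identity and conclude by linearity of $z \mapsto z\cdot G$ --- is the natural one, and the core fact you invoke, that a circulant matrix commutes with cyclic shifts, i.e. $\rot_i(y\cdot A) = \rot_i(y)\cdot A$, is correct. The problem is the final ``bookkeeping'' step. From $\drot_i(x\cdot G) = (\rot_i(x), \rot_i(x)\cdot A)$ you immediately get $\drot_i(x\cdot G) = \rot_i(x)\cdot G$ --- with the \emph{same} index $i$ on both sides, since $\drot_i$ and $\rot_i$ are both left rotations by $i$ positions and the identity block of $G$ forces the message to rotate exactly as the codeword does. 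The asserted correspondence ``a shift of the codeword by $j$ corresponds to a shift of the message by $k-j$'' is not a consequence of the ring picture; it is simply false. Concretely, for $k=3$, $G=[I_3\mid A]$ and $x = e_1$: $\drot_1(e_1\cdot G)$ equals the third row of $G$, i.e. $\rot_1(e_1)\cdot G = e_3\cdot G$, whereas $\rot_{k-1}(e_1)\cdot G = e_2\cdot G$ is the second row; these always differ (already in the identity block).

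What your argument actually proves is $\Gamma_a(x\cdot G) = \bigl(\sum_{i=1}^{k}\alpha_i\,\rot_i(x)\bigr)\cdot G$, which matches the lemma only if $\Gamma'_a(x)$ is read as $\sum_i \alpha_i \rot_i(x)$ rather than the paper's $\sum_i \alpha_i \rot_{k-i}(x)$. The $k-i$ in the definition of $\Gamma'_a$ looks like an indexing slip in the paper (the protocol only needs \emph{some} explicit message-side combination whose encoding equals $\Gamma_a$ of the codeword), but a proof cannot bridge the mismatch by declaring an index flip that does not hold: you should either prove the matching-index identity and flag the discrepancy with the stated definition of $\Gamma'_a$, or show why $\rot_i(x)\cdot G = \rot_{k-i}(x)\cdot G$ (it is not true in general). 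A secondary gap: the reduction to systematic form is under-justified. Row equivalence $G \mapsto S\cdot G$ \emph{does} change the encoding map $x \mapsto x\cdot G$, and the identity survives the reduction only because the change-of-basis matrix $S$ is itself circulant (it is multiplication by an invertible element of $\FF_{q^m}[X]/(X^k-1)$) and hence commutes with every $\rot_i$; as written, ``the statement only concerns the encoding map'' is not a reason, and moreover the generator matrix produced by the paper's key generation is of the form $[C_1\mid C_2]$ with both blocks circulant, not systematic.
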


As we already noted in Section \ref{sec:prel} a codeword $y$ of a $[2k,k]$ double circulant code can be seen as the concatenation of two blocks, i.e.  $y=(y_1, y_2)$, of length $k$.
If we consider each block $y_1, y_2$ as a polynomial in $R = \FF_{q}[X]/(X^k - 1)$ then the function $(y_1,y_2) \mapsto \drot_{i}((y_1,y_2))$ is equal to $(y_1,y_2) \mapsto (X^i \cdot y_1,X^i \cdot y_2)$, where the multiplication by $X^i$ is performed in the ring, i.e. modulo $X^k - 1$.

Although there is no general complexity result for quasi-cyclic codes, 
their decoding is considered to be difficult by the community. 
There exist structural attacks which uses the cyclic structure of 
the code \cite{sendrier2011decoding,hauteville2015new,guo2015new,londahl2016squaring}, 
but these attacks have only a very limited impact  on the practical complexity of the problem. 
These attacks are especially efficient in the case when 
the polynomial $X^n - 1$ has many small factors.
These attacks become inefficient as soon as $X^n - 1$ has only two factors 
of the form $(X - 1)$ and $X^{n-1} + X^{n-2} + \ldots + X + 1$,
which is the case when $n$ is primitive in $\FF_{q^m}$.
The conclusion is that in practice, 
the best attacks are the same as those for non-circulant
codes up to a small factor.
Another solution to completely avoid such attacks is to use the ring
$R = \FF_{q}[X]/(X^k - p(X))$, where $p(X)$ is a polynomial with coefficients in $\FF_{q}$, and $X^k - p(X)$ is irreducible over $\FF_{q}$.

Recall that we will denote by $\secpar$ the security level of the scheme.
The key generation algorithm is listed in Fig. \ref{fig:verondb_keygen}.
The RVDC identification protocol is listed
in Fig. \ref{fig:verondb_idp}.

\restylefloat*{figure}
\begin{figure}[htb]
\centering
\procedure{RVDC: \kgen(\secparam)}{%
\pcln \text{Define } m,n,k,r \text{ as in Sect. \ref{sec:param}}\\
\pcln x \sample (\FF_{q^m})^k \\
\pcln e \sample (\FF_{q^m})^n  \text{ s.t. } \rankweight(e) = r\\
\pcln \sk \gets (x,e) \\
\pcln G \sample (\FF_{q^m})^{n}  \\
\pcln G' \in (\FF_{q^m})^{k \times n} \gets \text{ Expand } G \text{ in double circulant form} \\
\pcln y \gets x\cdot G' + e \\
\pcln \pk \gets (y,G,r)\\ 
\pcln \pcreturn \sk, \pk
}
\caption{RVDC key generation algorithm in the rank metric}
\label{fig:verondb_keygen}
\end{figure}

\begin{figure}[ht]
	\centering
	\pseudocode[codesize=\scriptsize]{%
		\textbf{Prover} \< \< \textbf{Verifier} \\[0.1\baselineskip][\hline]
		\< \< \\[-0.5\baselineskip]
		\sk,\pk = (x,e),(y,G,r)\gets \kgen \< \< \pk \\[0.1\baselineskip][\hline]
		\< \< \\[-0.5\baselineskip]
		u \sample (\FF_{q^m})^k \\
		Q \sample M_{m,m}(\FF_{q}) , P \sample M_{n,n}(\FF_{q}) \< \< \\ \< \< \\
		c_1 \gets \hash(P, Q) \<  \< \\
		c_2 \gets \hash(\Pi_{P,Q}(u \cdot G)) \< \sendmessageright*[1.5cm]{c_1,c_2} \< \\
		\<  \< a = (\alpha_1, \ldots, \alpha_{k}) \sample (\FF_{q})^{k}, \\
		\< \sendmessageleft*[1.5cm]{a} \< a_i \text{ not all the same} \\
		c_3 \gets \hash(\Pi_{P,Q}(u \cdot G +  \Gamma_{a}(e))) \< \sendmessageright*[1.5cm]{c_3} \< \\
		\< \sendmessageleft*[1.5cm]{b} \< b \sample \{0,1\} \\
		\pcif b=0 \< \< \\
		\pcind  \rsp_1 \gets (P, Q), \rsp_2 \gets u+ \Gamma'_{a}(x) \< \sendmessageright*[1.5cm]{\rsp_1,\rsp_2} \< \pcif c_1 = \hash(\rsp_1) \wedge \\ 
		\< \< \pcind c_3=\hash(\Pi_{\rsp_1}(\rsp_2\cdot G +  \Gamma_{a}(y)) \\
		\< \< \pcind[2] \pcreturn \true \\
		\pcif b=1 \< \< \\
		\pcind  \rsp_1 \gets \Pi_{P,Q}(u \cdot G),\rsp_2 \gets \Pi_{P,Q}( \Gamma_{a}(e)) \< \sendmessageright*[1.5cm]{\rsp_1,\rsp_2} \< \pcif c_2 = \hash(\rsp_1) \wedge \\ 
		\< \< \pcind  c_3=\hash(\rsp_1 + \rsp_2) \wedge \\
		\< \< \pcind   \rankweight(\rsp_2)=r \\
		\< \< \pcind[2] \pcreturn \true
	}
	\caption{RVDC identification protocol in the rank metric}
	\label{fig:verondb_idp}
\end{figure}

In Section~\ref{sec:sign_descr_rvdc}, 
we describe how to convert the identification protocol from Fig. \ref{fig:verondb_idp} into a signature scheme, 
to which we will refer to as \emph{Rank Veron Double Circulant (RVDC) Signature scheme}, 
using a generalization of the Fiat-Shamir transform, introduced in \cite{dagdelen2016extended}.
The signature size of the scheme can be reduced 
by applying the commitment compression technique used in \cite{aguilar2011new}.
We will call the scheme resulting from this variation  \emph{compressed Rank Veron Double Circulant} (cRVDC) scheme.

In Sect.~\ref{sec:id_veron_dc_zn} we prove that the identification protocol is complete, sound and that the communication leaks no information on the secret key.
The security of RVDC scheme is based on a variant of the Rank Syndrome Decoding problem, 
that we call \emph{Differential Rank Decoding Problem}, 
defined as Problem~\ref{prob:DRDP} in the same section.


\section{Parameters choice}
\label{sec:param}

In this section we provide a set parameters for 80, 128, 192, 256 bit of classical security, 
corresponding to 40, 64, 96, 128 bit of quantum security, the last three falling into 
category 1,  3, and 5 in the NIST post-quantum competition.

The best generic combinatorial attack to solve the RSD problem has a complexity of 
$
\bigO{(n-k)^3 m^3 q^{r\frac{(k+1)m}{n}-m}}
$
\cite{aragon2017improvement}. 
If $k \ge \left\lceil \frac{(r+1)(k+1)-(n+1)}{r} \right\rceil$,
an algebraic approach \cite{gaborit2016complexity} is also possible to recover the error in 
$
\bigO{r^3 k^3 q^{r\left\lceil \frac{(r+1)(k+1)-(n+1)}{r} \right\rceil}}
$ steps. 
Finally, to avoid specific Gr\"obner basis attacks, the condition $n > r(k+1)$ should hold.
We choose the values $m,n,k,r$ accordingly.
As far as it concerns post-quantum security, the author of \cite{loidreau2017new}, in line with \cite{bernstein2010grover}, presents some arguments showing that the post-quantum complexity of RSD is computed by square-rooting the exponential term in the classical complexity formula.


Recall that for the case of double circulant code we have to choose $n=2k$.
As suggested in \cite{stern1993new}, it is better to choose $r$ slightly below 
the theoretical distance $d$ provided by the Gilbert-Varshamov bound, 
in order to avoid possible small rank attacks similar to small weight codewords attack such as \cite{stern1988method}.
We choose $m$ to be prime, so to have no subfields of $\FF_{2^m}$, 
which in other cases leads to attacks.
We also need to choose the number of rounds $\delta$ in order to decrease the impersonation probability to our needs. 
As far as it concerns the identification protocols, the impersonation probability of one single round for RVDC is $p=\frac{q^k+\rho}{2q^k}$ with overwhelming probability. 
To reach a security level $l$ with an impersonation probability of $p$,
i.e. to compute the number of round $\delta$, 
we need to set $\delta = \log_{p} (1/2^l)$. 
This results in $\delta = 81, 129, 193, 257$,
corresponding to 80, 128, 192, 256 bit security level in the classical scenario.
In Table \ref{tab:param_rvdc}, we propose 4 sets of parameters, respectively for the 80, 128, 192, and 256 bit security level in the classical scenario, for both RVDC and cRVDC signature schemes.

For all the proposed parameters it holds the condition $k < \left\lceil \frac{(r+1)(k+1)-(n+1)}{r} \right\rceil$, so the algebraic attack of \cite{gaborit2016complexity} must be taken into consideration while evaluating the security.

In the table $A=r^3 k^3 q^{r\left\lceil \frac{(r+1)(k+1)-(n+1)}{r} \right\rceil}$, 
$B=(n-k)^3 m^3 q^{r\frac{(k+1)m}{n}-m}$, 
$C=r^3 k^3 q^{r\left\lceil \frac{(r+1)(k+1)-(n+1)}{2r} \right\rceil}$, 
$D=(n-k)^3 m^3 q^{r\frac{(k+1)m}{2n}-m}$, 
\begin{table}
	\begin{center}
		\resizebox{\textwidth}{!}{
		\begin{tabular*}{\textwidth}{c @{\extracolsep{\fill}}|cccccccc|cc|cc}
			\multicolumn{9}{c}{Parameters} & \multicolumn{2}{c}{Classic Attacks WF} & \multicolumn{2}{c}{Quantum Attacks WF}\\
			$\secpar$ & $q$ & $m$ & $n$ & $k$ & $r$ & $\rho$ & $\delta$ & $\hlength$ & $\log_2 A$ & $\log_2 B$ &  $\log_2 C$ & $\log_2 D$ \\
			\hline
			96 & 2 & 29 & 22 & 11 & 7 & 10 & 81 & 160 & 95.801 & 106.68 & 60.800 & 51.316 \\
			125 & 2 & 31 & 26 & 13 & 8 & 10 & 129 & 256 & 124.10 & 128.50 & 76.102 & 61.733 \\
			193 & 2 & 41 & 34 & 17 & 10 & 10 & 193 & 384 & 192.23 & 204.39 & 112.23 & 95.864 \\
			252 & 2 & 47 & 38 & 19 & 12 & 10 & 257 & 512 & 251.50 & 279.25 & 143.50 & 130.83 \\
		\end{tabular*}
	}
	\end{center}
	\caption{RVDC and cRVDC parameters.}
	\label{tab:param_rvdc}
\end{table}

\section{Key and signature size comparison}
\label{sec:comp}
%
\begin{table}[!ht]
\begin{center}
\resizebox{\textwidth}{!}{
\begin{tabular}{cccc|rrr}
$\secpar$ & Scheme & Metric & Scheme parameters & $|\signature|$ & $|\sk|$ & $|\pk|$ \\
\hline
&  &  & $(m,t,\delta,i)$ & & &  \\
$81$  & Parallel-CFS \cite{finiasz2010parallel} & Hamm. & $(20,8,2,3)$  &   294 & 20\,971\,680 & 167\,746\,560 \\
$80$  & Parallel-CFS \cite{finiasz2010parallel} & Hamm. & $(17,10,2,2)$ &   196 & 2\,228\,394 & 22\,253\,340 \\
\hline
&  &  & $(n,k,\omega,Q)$ & & & \\
177       & RaCoSS \cite{roy2017racoss} & Hamm. & $(2400,2060,48,0.07)$ & 4800 & 5\,760\,000 & 816\,000 \\ 
177       & RaCoSS(Compr.) \cite{roy2017racoss} & Hamm. & $(2400,2060,48,0.07)$ & 2436 & 1\,382\,400 & 816\,000 \\ 
\hline
 &  &  & $(q,m,n,k,d,t,t',r)$ & & &  \\
128 & RankSign I \cite{gaborit2014ranksign}  & Rank & $(2^{32},21,20,10,2,2,1,8)$ & 11\,008 &    540\,288 & 80\,640  \\
128 & RankSign II \cite{gaborit2014ranksign} & Rank & $(2^{24},24,24,12,2,2,2,10)$& 12\,000 &    652\,032 & 96\,768  \\
192 & RankSign III \cite{gaborit2014ranksign} & Rank & $(2^{32},27,24,12,2,3,1,10)$& 17\,280 & 1\,034\,208 & 155\,520 \\
256 & RankSign IV \cite{gaborit2014ranksign} & Rank & $(2^{32},30,28,14,2,3,2,12)$& 23\,424 & 1\,527\,360 & 228\,480 \\
\hline
&  &  & $(r,m,p,w)$ & & &  \\
128 & pqsigRM-4-12 \cite{lee2017pqsignrm} & Hamm. & $(4,12,16,1295)$ & 4\,224 & 27\,749\,002 & 2\,621\,788 \\
196 & pqsigRM-6-12 \cite{lee2017pqsignrm} & Hamm. & $(6,12, 8, 311)$ & 4\,224 &  19\,326\,902 & 3\,980\,860 \\
256 & pqsigRM-6-13 \cite{lee2017pqsignrm} & Hamm. & $(6,13,16,1441)$ & 8\,320 & 16\,777\,216 & 84\,020\,992 \\
\hline
&  &  & $(n,k,w,k_U,k_V)$ & & &  \\
128 & Wave \cite{debris2018wave} & Hamm. & (5172, 3908, 4980, 2299, 1609) & 8\,326 & na & 7\,840\,000 \\
\hline\hline
&  &  & $(q,n,k,w,\delta,\hlength)$ & & &  \\
$80$  & Stern \cite{alaoui2013code} & Hamm. & $(2,   768, 384, 76, 141, 160)$ & 908\,534 &    768 & 147\,846 \\ 
$80$  & Veron \cite{alaoui2013code} & Hamm. & $(2,   768, 384, 76, 141, 160)$ & 872\,438 & 1\,152 & 148\,230 \\ 
$80$  & CVE \cite{alaoui2013code} & Hamm. & $(2^8, 144, 72,  55, 80,  160)$ &    531\,539 & 1\,152 &  42\,053 \\ 
\hline
&  &  & $(n,k,i,w,\delta,\hlength)$ & & &  \\
68 &  Veron Double Circulant \cite{aguilar2011new} & Hamm. & (698,349,19,70) & 93\,000 & 700 & 1050 \\ 
\hline
&  &  & $(q,m,n,k,r,\delta,\hlength)$ & & &  \\
68 &  Rank Stern \cite{gaborit2011full} & Rank & (2,20,20,11,3,137,160) & na & 400 & 2160 \\ 
\hline
&  &  & $(q,m,n,k,r,\rho,\delta,\hlength)$ & & &  \\
96 &  RVDC & Rank & (2,29,22,11,7,10,81,160) & 157\,140 & 957 & 960 \\ 
96 & cRVDC & Rank & (2,29,22,11,7,10,81,160) & 84\,863 & 957 & 960 \\ 
125 &  RVDC & Rank & (2,31,26,13,8,10,129,256) & 334\,626 & 1\,209 & 1\,212 \\ 
125 & cRVDC & Rank & (2,31,26,13,8,10,129,256) & 179\,854 & 1\,209 & 1\,212 \\ 
193 &  RVDC & Rank & (2,41,34,17,10,10,193,384) & 832\,409 & 2\,091 & 2\,095 \\ 
193 & cRVDC & Rank & (2,41,34,17,10,10,193,384) & 440\,510 & 2\,091 & 2\,095 \\ 
252 &  RVDC & Rank & (2,47,38,19,12,10,257,512) & 1\,437\,915 & 2\,679 & 2\,683 \\ 
252 & cRVDC & Rank & (2,47,38,19,12,10,257,512) & 762\,935 & 2\,679 & 2\,683 \\ 
\hline\hline
 &  & & $(n, h, d, \log t, k, w)$ & & & \\
133 & SPHINCS$^+$-128s \cite{bern2017sphincsplus}  & - & (16, 64, 8, 15, 10, 16)  & 64\,640     & 512   & 256   \\
128 & SPHINCS$^+$-128f \cite{bern2017sphincsplus}   & - & (16, 60, 20, 9, 30, 16)  & 135\,808   & 512   & 256   \\
196 & SPHINCS$^+$-192s \cite{bern2017sphincsplus}  & - & (24, 64, 8, 16, 14, 16)  & 136\,512    & 768   & 384   \\
195 & SPHINCS$^+$-192f \cite{bern2017sphincsplus}   & - & (24, 66, 22, 8, 33, 16) & 285\,312   & 768   & 384   \\
255 &  SPHINCS$^+$-256s \cite{bern2017sphincsplus} & - & (32, 64, 8, 14, 22, 16)  & 238\,336   & 1\,024  & 512   \\
254 & SPHINCS$^+$-256f \cite{bern2017sphincsplus}   & - & (32, 68, 17, 10, 20, 16) & 393\,728   & 1\,024  & 512  \\
\hline
\end{tabular}
}
\end{center}
\caption{Comparison of keys and signature bit sizes between our proposals and the most popular code-based and hash-based signature schemes.}
\label{tab:comp_keys_pq}
\end{table}

In Table \ref{tab:comp_keys_pq} we report some key and signature bit sizes 
for other signature schemes based on codes. 
In particular, we report the results of hash-and-sign signature schemes such as
Parallel-CFS \cite{finiasz2010parallel},
the three NIST competitors for signatures based on codes, 
RankSign \cite{gaborit2014ranksign}, RaCoSS \cite{roy2017racoss}, and pqsigRM \cite{lee2017pqsignrm}, 
and Wave \cite{debris2018wave}, which has been proposed very recently. 
We also add the results from \cite{alaoui2013code} regarding the Hamming variants of Stern, Veron and CVE signature schemes,
one entry for the parameters proposed in \cite{aguilar2011new} for the double circulant version of Veron scheme in the Hamming metric, 
and one entry for the parameters proposed in \cite{gaborit2011full} for the rank version of Stern signature scheme.
As far as it concerns the latter, we remark that when the work was published,
results from \cite{gaborit2016complexity}, \cite{aragon2017improvement}, and \cite{loidreau2017new} 
were not known, so the security was believed to be 83 bits.
While for the parameters in \cite{aguilar2011new}, according the decoding complexity estimation of $2^{0.097n}$ given in \cite{may2015computing}, 
the security of the scheme is about 68 bits,
while in \cite{aguilar2011new} was claimed to be 81.
Recall also that for all three NIST competitors some attacks have been found, 
so either the parameters should be made larger or 
some modification of the scheme will be proposed in the future.

For completeness, we also report key and signature size of one of the most popular hash-based signature scheme, SPHINCS$^+$, introduced in \cite{bernstein2015sphincs}. The parameters that we consider are from the NIST submission document \cite{bern2017sphincsplus}. We can see that SPHINCS$^+$ has signatures and keys that are from 2 to 5 times smaller compared to cRVDC.

\section{Performance}
\label{sec:perf}
The cost of 
RVDC, and cRVDC key generation algorithm 
is dominated by the multiplication of a vector to the generator matrix.
Only one multiplication is needed to generate the public key, 
and this makes the key generation particularly fast.
On the other hand, 
the cost of signature and verification algorithms
are dominated by the number of rounds and the cost of the underlying hash function.
In particular, 
%
in the RVDC scheme (see Appendix \ref{sec:sign_descr_rvdc}),
$3\delta + 2$ and $2\delta+2$ hashes have to be computed, respectively, for the signature and the verification.
In the cRVDC scheme,
$3\delta + 3$ and $2\delta + 3$ hashes have to be computed, respectively, for the signature and the verification.

In Table \ref{tab:perf}, we report the performance of our scheme on a MacBook Pro equipped with a 2.9 GHz Intel Core i7 and a Huawei P20 Pro equipped with a Kirin 970 supporting ARMv8 instructions. The implementation is using AVX2 or NEON instructions sets for the finite field arithmetic but not on any other part of the code. The hash functions used are from the SHA2 family when the digest size matched the requirements and SHAKE256 when a longer output was needed. We also used AES-CTR-DRBG as a PRNG for random number generation. We compared our implementation with the optimized implementation of SPHINCS$^+$-SHAKE256 from SPHINCS$^+$ NIST submission package. As observed, our proposals outperform SPHINCS$^+$ in all cases. The table entries are in operations per second.

\begin{table}
	\begin{center}
		\begin{tabular}{lccccccc}
			& & \multicolumn{3}{c}{Macbook Pro} & \multicolumn{3}{c}{Huawei P20 Pro} \\
			Scheme & Security Level & $\;\;\kgen$ & $\;\;\sign$ & $\;\;\verify$ & $\;\;\kgen$ & $\;\;\sign$ & $\;\;\verify$ \\
			\hline
			RVDC   & 80 & 122706.66 & 333.27 & 1447.46 & 68023.54 & 153.42 & 607.5 \\
			cRVDC & 80 & 122706.66 & 332.24 & 1420 & 68023.54 & 148.07 & 582.97 \\
			\hline
			RVDC   & 128 & 94041.80 & 146.87 & 738.04 & 51771.48 & 76.93 & 299.02 \\
			cRVDC & 128 & 94041.80 & 161.45 & 701.09 &  51771.48 & 74.1 & 315.97  \\
			SPHINCS$^+$-128f & 128 & 194.81 & 12.88 & 143.73 & na & na & na \\
			\hline
			RVDC   & 192 & 47343.91 & 62 & 267.3 & 24982.4 & 32.79 & 130.31 \\
			cRVDC & 192 & 47343.91 & 64.69 & 287.27 & 24982.4 & 31.61 & 129.87\\
			SPHINCS$^+$-192f & 192 & 132.14 & 9.73 & 93.75 & na & na & na\\
			\hline
			RVDC   & 256 & 28134.23 & 43.49 & 178.53 & 14157.74 & 19.79 & 81.46\\
			cRVDC & 256 & 28134.23 & 41.74 & 182.27 & 14157.74 & 19.08 & 80.33 \\
			SPHINCS$^+$-256f & 256 & 55.72 & 4.7 & 95.45 & na & na & na\\
		\end{tabular}
	\end{center}
	\caption{RVDC and cRVDC operations per second.}
	\label{tab:perf}
\end{table}
%


\section{Conclusions}
\label{sec:concl}

We have presented two code-based signature schemes derived from a 5-pass identification protocol with cheating probability close to 1/2, using double circulant codes in the rank metric.
The second scheme optimizes the signature size from the first one, at the cost of few hash computations.
The resulting signature scheme has a signature size of approximately 11, 22, 54, and 93 KBytes for a corresponding security level of 96, 125, 193, and 254.
When compared to one of the most popular post-quantum hash-based signature schemes, namely SPHINCS+, the key generation algorithm is between 350 and 500 times faster, the signing algorithm is approximately ten times faster, and the verification algorithm is twice as fast.

%
%
%

%

\appendix

\section{Post-quantum security of the Fiat-Shamir transform}
\label{sec:pqc_fs}

It is well known that the Fiat-Shamir transform is secure in the random oracle model (ROM), see e.g. \cite{kiltz2018concrete}.
However, when the adversary has a quantum access to the oracle, i.e. in the quantum random oracle model (QROM), the situation is somehow more complex, and recently many results have been published (e.g. \cite{boneh2011random}, \cite{unruh2015non}, \cite{kiltz2018concrete}, \cite{unruh2017post}).
Since most of the schemes we compare to do not take into account this scenario,
we also omit it, and leave it to future research.

An alternative quantum secure transform by Unruh \cite{unruh2015non}
could be used instead of the Fiat-Shamir one,
yielding though a considerably less efficient signature,
since multiple executions of the underlying identification scheme are required.

In \cite{unruh2017post}, it is proven that if
a sigma-protocol has honest-verifier zero-knowledge and statistical soundness with a dual-mode hard instance generator,
then the resulting Fiat-Shamir signature scheme is unforgeable in the quantum scenario.
It is easy to see that our proposal has a dual-mode hard instant generator and honest-verifier (computational) zero-knowledge, but
on the other hand, a computationally unbounded adversary would prevent us from achieving statistical soundness. 
Thus, we cannot apply the results of \cite{unruh2017post} to our proposal.
Still, to the best of our knowledge, no quantum attack has been published to Veron-like constructions.

\section{Zero-knowlegde properties of RVDC signature scheme}
\label{sec:id_veron_dc_zn}
In this section we prove the security of RVDC scheme by showing how the completeness, soundness and zero-knowledge properties are achieved. In the proofs we follow \cite{aguilar2011new}.

\subsubsection{Completeness}
Given $(\sk,\pk)$ output from $\kgen$ function, 
it easy to see that for any possible $\sk = (x,e)$ the Verifier always accepts after interacting with the Prover $P$ on common input $\pk$. 
This is because the honest Prover who knows $\sk$ is be able to construct the three commitments $c_1,c_2,c_3$. 
Furthermore, the Verifier is always able to identify the Prover because the verifications match with the given commitments.

In particular, 
the check on the value $c_3$ when $b=0$ is valid because of Lemma \ref{thm:gamma_property}, i.e. $\Gamma_a(x \cdot G) = \Gamma'_a(x) \cdot G $. Thanks to this we have that 
$
u \cdot G + \Gamma_a(e) = 
u \cdot G + \Gamma_a(x \cdot G) + \Gamma_a(y) =
u \cdot G + \Gamma'_a(x) \cdot G + \Gamma_a(y) =
(u + \Gamma'_a(x)) \cdot G + \Gamma_a(y)
$.

Notice also that the components of the first challenge $a$ cannot be all the same,
otherwise $\rankweight(\Gamma_a(e)) = 0$ or $2$, 
depending of $a$ being equal to $(0, \ldots,0)$, $(\tilde{a}, \ldots,\tilde{a})$ respectively, 
and the check when $b=1$ would fail.

\subsubsection{Soundness}
We will show that if someone can be successfully identified by $\verifier$ with the protocol,
then it is able to retrieve the secret in polynomial time with a certain probability.
To do so, we introduce a specific problem which is easier to be solved than the syndrome decoding,
\footnote{This problem is the analog of the \emph{Differential Syndrome Decoding Problem} 
	(denoted \emph{Probl{\`e}me de d{\'e}codage par syndrome diff{\'e}rentiel}) in \cite{schrek2013signatures},
	for the Hamming metric.
	The same problem is used in \cite{aguilar2011new}.} 
except when there is only one solution, 
in which case the two problems are the same.
The way in which we assure the security is by 
choosing the parameters which allow to decrease 
the size of the solutions of the new problem to one with a probability exponentially close to 1 
(in practice, this probability to have more than one solution is $2^{\secpar}$).

\begin{problem}[Differential Rank Decoding Problem]
	\label{prob:DRDP}
	Consider $H$ a random double circulant matrix, 
	$Y$ a random codeword in $(\FF_q)^n$ of rank weight $r$, 
    and $A = \{a_1,\ldots,a_{\rho}\} \subseteq (\FF_{q})^k$, 
    with $a_j$ all distinct for $j=1,\ldots,\rho$, and
    $a_j=(\alpha_1,\ldots,\alpha_k)$,
    with $\alpha_1,\ldots,\alpha_k$ all distinct.
	Let $H\cdot Y^T$ be a syndrome.
	The problem ${\mathcal P}(H,Y\rho,A,r)$ consists in finding $\rho$ words $z_j$ and a constant $C$ 
	such that $H \cdot \Gamma_{a_j}(Y)^T - H \cdot z_j^T = C$,
	and $\rankweight(z_j) = r$ for all $j < \rho$.
\end{problem}

The above mentioned problem is easier than the
independent syndrome decoding problem, because of the addition of the unknown $C$.
However, it still seems to be hard to be solved.
Note that we can suppose that there exist a particular solution $Z_1, \ldots, Z_\rho, C$ to the problem ${\mathcal P}(H,Y\rho,A,r)$,
such that $C$ is equal to 0.
In this case, we have to solve the usual rank syndrome decoding problem 
$H \cdot  \Gamma_{a_j}(Y)^T= H \cdot z_j^T$ for all $j < i$.



Lemma \ref{thm:lemma2} gives the probability to find a solution of Problem \ref{prob:DRDP}
\begin{lemma}\label{thm:lemma2}
    Consider $\rho,A,r$ fixed.
	Let $Z_C = (Z_1, \ldots, Z_\rho, C)$ be a random vector with $Z_j, 1 \le j \le \rho$ a random variable with uniform distribution over the words of rank weight $r$,
    and $C$ a random variable with uniform distribution over $(\FF_{q^m})^{n-k}$.	
    Let $S_\rho$ be a random variable equal to the set of the solutions of the problem ${\mathcal P}(H,Y\rho,A,r)$,
    $\rho,A,r$ as in Problem \ref{prob:DRDP}.
    Note that $S_{\rho}$ is a random variable, 
    in the sense that $S_\rho$ is defined relatively to $H$ a random double circulant matrix and 
    $Y$ a random word of weight $r$.
	We have $\prob{Z_C \in S_\rho} =\frac{1}{(q^{m(n-k)})^\rho}$.
\end{lemma}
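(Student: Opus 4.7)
The plan is to decouple the constant $C$ from the rest of the randomness, reduce the event $Z_C \in S_\rho$ to a collision event among the $\rho$ syndromes $H W_j^T$ where $W_j := \Gamma_{a_j}(Y) - Z_j$, and then invoke a uniformity argument based on the circulant structure of $H$.

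First, the event $Z_C \in S_\rho$ is by construction equivalent to $H W_j^T = C$ for every $j \in \{1,\ldots,\rho\}$; the rank conditions on the $Z_j$ are automatic from their distribution. Since $C$ is uniform on $\FF_{q^m}^{n-k}$ and independent of $(H, Y, Z_1, \ldots, Z_\rho)$, integrating it out gives
\[ \prob{Z_C \in S_\rho} \;=\; \frac{1}{q^{m(n-k)}} \cdot \prob{H W_1^T = H W_2^T = \cdots = H W_\rho^T}, \]
reducing the problem to showing that the collision probability on the right equals $q^{-m(n-k)(\rho - 1)}$.

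Next, because the $Z_j$ (and hence the $W_j$) are mutually independent, conditioning on $H, Y$ and splitting over the common value $c$ of the $\rho$ syndromes yields
\[ \prob{\forall j \colon H W_j^T = c \mid H, Y} \;=\; \prod_{j=1}^{\rho} \prob{H W_j^T = c \mid H, Y}. \]
Provided each factor is $q^{-m(n-k)}$, summing over the $q^{m(n-k)}$ values of $c$ and taking the outer expectation over $H, Y$ gives $q^{-m(n-k)(\rho - 1)}$, which combined with the $C$-factor produces the claimed $q^{-m(n-k)\rho}$. The remaining task is therefore to argue that each $H W_j^T$ is uniform on $\FF_{q^m}^{n-k}$.

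For this uniformity, I would write $H = [I_k \mid A]$ with $A$ a uniform $k \times k$ circulant, identify $A$ with a uniform element $a$ of the ring $R = \FF_{q^m}[X]/(X^k - 1)$, and split $W_j = (w_{j,1}, w_{j,2})$. The syndrome $H W_j^T$ then corresponds to $w_{j,1} + a \cdot w_{j,2}$ in $R$. Whenever $w_{j,2}$ is a unit in $R$, the map $a \mapsto w_{j,1} + a \cdot w_{j,2}$ is a bijection of $R$, so $H W_j^T$ is exactly uniform over $R \cong \FF_{q^m}^{n-k}$ as $a$ varies. The parameter choice recalled in Section~\ref{sec:id_veron_dc} (namely that $k$ is prime and $X^k - 1$ splits only as $(X - 1)(X^{k-1} + \cdots + 1)$) guarantees that non-units of $R$ form an exponentially small fraction, so a random $W_j$ lies outside this bad set with overwhelming probability.

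The main obstacle is precisely this last step: strictly speaking, $H W_j^T$ is only approximately uniform, since a small (exponentially suppressed) fraction of $W_j$ have non-invertible second half. Absorbing that defect cleanly, so that the final answer is exactly the $q^{-m(n-k)\rho}$ stated in the lemma rather than a statement modulo a negligible additive term, is the one piece of the argument that requires concrete use of the parameter choice beyond routine bookkeeping.
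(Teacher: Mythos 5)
Your reduction of the event $Z_C \in S_\rho$ to a collision event on the syndromes $H W_j^T$ with $W_j = \Gamma_{a_j}(Y) - Z_j$, followed by integrating out the independent uniform $C$, is a cleaner reorganization of what the paper does by recursive conditioning; up to that point the two arguments amount to the same counting. The gap is in your final uniformity step. The factorization $\prob{\forall j\colon H W_j^T = c \mid H, Y} = \prod_j \prob{H W_j^T = c \mid H, Y}$ is valid only because you have conditioned on $H$ (and $Y$), so the only randomness left in each factor is $Z_j$. But your argument that each factor equals $q^{-m(n-k)}$ --- the bijection $a \mapsto w_{j,1} + a\,w_{j,2}$ of the circulant ring when $w_{j,2}$ is a unit --- is a statement about the distribution of the syndrome as the circulant block $a$ of $H$ varies, i.e.\ it uses precisely the randomness you have just frozen. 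For a fixed $H$, the distribution of $H Z_j^T$ with $Z_j$ uniform over the rank-$r$ words is not uniform on $(\FF_{q^m})^{n-k}$ (the number of rank-$r$ words need not even be a multiple of $q^{m(n-k)}$), and averaging the product over $H$ afterwards does not commute with the factorization. The two halves of your argument draw on incompatible sources of randomness and cannot both be invoked at once.

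The paper closes this step differently: after reducing (by conditioning and independence of the $Z_j$) to $\prob{H Z_1^T = C}^\rho$, it appeals to the hardness of the Decisional Rank Double Circulant Syndrome Decoding problem to assert that syndromes of random rank-$r$ words are indistinguishable from uniform, and hence treats them as exactly uniformly distributed over $(\FF_{q^m})^{n-k}$. This is itself loose --- a computational indistinguishability assumption is being used to justify an exact statistical identity --- so your closing remark that the equality should really carry an error term is a fair observation about the lemma as stated. But as written, your proof replaces the paper's declared assumption with an algebraic argument that does not apply to the conditional probabilities you actually need to bound. To repair it along your lines you would have to either establish uniformity of $H Z_j^T$ over the randomness of $Z_j$ for (almost all) fixed $H$, or restructure the argument so that the randomness of $H$ is genuinely available in each factor.
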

\begin{proof}
	\begin{align*}
	& \prob{
		Z_C \in S_\rho} = \\
	& = \prob{
		H \cdot Z_1^T = C + H \cdot \Gamma_{a_1}(X)^T
		\cap
		\ldots
		\cap
		H \cdot Z_\rho^T = C + H \cdot \Gamma_{a_\rho}(X)^T
	}
	\,,
	\end{align*}
	which, by the conditional probability formula, 
	is the product of the following two probabilities
	\begin{align*}
	& \prob{
		H \cdot Z_1^T = C + H \cdot \Gamma_{a_1}(X)^T 
		|
		\bigcap_{j=2}^{\rho} 
		H \cdot Z_j^T = C + H \cdot \Gamma_{a_j}(X)^T
	} 
	\cdot \\
	& \prob{
		\bigcap_{j=2}^{\rho}
		H \cdot Z_j^T = C + H \cdot \Gamma_{a_j}(X)^T
	}
	.
	\end{align*}
	In the case where the words of rank weight $r$ do not have a common image for $H$, we have 
	that:
	\begin{align*}
	 \prob{
		Z_C \in S_\rho} = 
	& \prob{
		H \cdot Z_1^T = C + H \cdot \Gamma_{a_1}(X)^T 
		|
		\bigcap_{j=2}^{\rho} Z_1 \ne Z_j
	} 
	\cdot \\
	 & \prob{
		\bigcap_{j=2}^{\rho}
		H \cdot Z_j^T = C + H \cdot \Gamma_{a_j}(X)^T
	}
	\,.
	\end{align*}
	These variables are independent, so
	\begin{align*}
	& \prob{
		Z_C \in S_\rho} = \\
	& = \prob{
		H \cdot Z_1^T = C + H \cdot \Gamma_{a_1}(X)^T
	} 
	\cdot
	\prob{
		\bigcap_{j=2}^{\rho}
		H \cdot Z_j^T = C + H \cdot \Gamma_{a_j}(X)^T
	}
	\,.
	\end{align*}
	Using a recursive argument, we have that 
	\begin{align*}
	& \prob{
		Z_C \in S_\rho} =
	\prob{
		H \cdot Z_1^T = C
	}^\rho
	\,.
	\end{align*}
	The hardness of the Decisional Rank Double Circulant Syndrome Decoding (DRDCSD) Problem (Defined in \cite{aguilar2016efficient})
	assures that the syndromes associated to codewords of given rank are indistinguishable from random syndromes, i.e. they are uniformly distributed among the syndrome space $(\FF_{q^m})^{n-k}$.
	Thus, we can conclude that
	$$
	\prob{Z_C \in S_\rho} = \frac{1}{(q^{m(n-k)})^\rho}.
	$$
	\qed
\end{proof}

\begin{lemma}\label{thm:lemma3}
	The distribution of $N_\rho$ describing the size of $S_\rho$ is the same of the variable $1 + Y$,
	with $Y$ a binomial distribution with parameters $N = (q^{m(n-k)} - 1)\genfrac[]{0pt}{}{n}{r}^\rho$ and
	$p = \frac{1}{(q^{m(n-k)})^\rho}$.
	Furthermore
	$$
	\mathbb{E}[N_\rho] = Np + 1 = (q^{m(n-k)} - 1) \left(\frac{\genfrac[]{0pt}{}{n}{r}}{q^{m(n-k)}}\right)^\rho \,.
	$$
\end{lemma}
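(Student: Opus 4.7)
The plan is to decompose $N_\rho$ as $1 + Y$, where the constant $1$ accounts for one distinguished tuple that always belongs to $S_\rho$ and $Y$ counts the remaining solutions, whose expected number and approximate distribution both follow from Lemma~\ref{thm:lemma2}.

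First I would pin down one canonical element of $S_\rho$. A natural candidate is the tuple coming from the honest instance itself: for example $z_j^* = \Gamma_{a_j}(Y)$ paired with $C^* = 0$ whenever these rotations happen to have rank exactly $r$, or more generally the solution implicit in the way the problem instance is generated, which is guaranteed to exist by construction. This contributes the constant summand $1$ and, equally importantly, isolates a distinguished value $C^*$ to be removed from the enumeration that follows.

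Next I would enumerate the remaining candidates. A generic tuple $(z_1,\ldots,z_\rho,C)$ has each $z_j$ chosen among the $\genfrac[]{0pt}{}{n}{r}$ vectors of rank weight $r$ and $C$ chosen in $(\FF_{q^m})^{n-k}$; removing $C^*$ leaves exactly
\[
N \;=\; (q^{m(n-k)}-1)\,\genfrac[]{0pt}{}{n}{r}^{\rho}
\]
candidate tuples. For each such candidate $t$, Lemma~\ref{thm:lemma2} yields $\prob{t \in S_\rho} = 1/q^{m(n-k)\rho} = p$, and writing $Y = \sum_{t} \mathbf{1}[t \in S_\rho]$ gives, by linearity of expectation,
\[
\mathbb{E}[N_\rho] \;=\; 1 + Np \;=\; 1 + (q^{m(n-k)}-1)\left(\frac{\genfrac[]{0pt}{}{n}{r}}{q^{m(n-k)}}\right)^{\rho},
\]
which matches the claimed formula.

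The step I expect to be the main obstacle is upgrading this mean computation to the stronger statement that $Y$ is genuinely binomial with parameters $(N,p)$. The $N$ indicators all depend on the same random pair $(H,Y)$, so they are not literally independent; pairwise independence is only plausible in a computational sense. To handle this I would invoke the same assumption used at the end of the proof of Lemma~\ref{thm:lemma2}, namely that the syndromes $H\Gamma_{a_j}(Y)^T$ are indistinguishable from independent uniform syndromes over $(\FF_{q^m})^{n-k}$ (the Decisional Rank Double Circulant Syndrome Decoding hardness). Under that hypothesis the events $\{t \in S_\rho\}$ decouple across distinct candidates $t$ and the sum is distributed as $\mathrm{Binomial}(N,p)$. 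The expectation formula itself, by contrast, holds unconditionally from linearity, regardless of whether the true distribution is exactly binomial.
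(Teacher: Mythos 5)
Your proposal matches the paper's proof: both decompose $N_\rho = 1 + Y$, with the constant $1$ accounting for the unique solution at $C=0$ and $Y$ a sum of indicators over the $N = (q^{m(n-k)}-1)\genfrac[]{0pt}{}{n}{r}^{\rho}$ remaining candidate tuples with $C \neq 0$, each lying in $S_\rho$ with probability $p$ by Lemma~\ref{thm:lemma2}. You are in fact slightly more careful than the paper, which simply asserts that $Y$ is binomial without addressing the dependence among the indicators (all defined relative to the same $(H,Y)$) that you correctly flag as the delicate step and handle via the same indistinguishability assumption already invoked in Lemma~\ref{thm:lemma2}.
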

\begin{proof}
	Let $Z_C = (Z_1, \ldots, Z_\rho, C)$
	be the random vector defined in Lemma \ref{thm:lemma2},
	with $C \ne 0$ and $T_C$ the variable equal to 1 when $Z_C \in S_\rho$ and 0 otherwise.
	$N_\rho = \sum_{C \ne 0} T_C + T_0$, with $T_0$ the number of solutions when $C = (0, \ldots,0)$.
	The variable $T_0$ is equal to 1 since
	for a given $C$ and $\rho$ distinct codewords of rank weight $r$
	only one solution can be found.
	The number of words of given rank weight $r$ is given by the 
	number of vector subspaces of length $n$ and dimension $r$, 
	which is indicated with $\genfrac[]{0pt}{}{n}{r}$ (defined in Section \ref{sec:prel}),
	while the number of all possible $C \ne (0, \ldots,0)$ is $q^{m(n-k)}  - 1$.
	So, we have
	$N_\rho = 1 + Y$ with $Y$ a binomial distribution with parameters 
	$N = (q^{m(n-k)}  - 1)\genfrac[]{0pt}{}{n}{r}^\rho$ and
	$p = \frac{1}{(q^{m(n-k)} )^\rho}$.
	\qed
\end{proof}

\begin{lemma}\label{thm:lemma4}
	Let $Y'$ be a random variable with Poisson distribution with parameter $Np$.
	Then we have
	$\prob{N_\rho = 1} \approx \prob{Y' = 0} \approx 1 - \frac{\genfrac[]{0pt}{}{n}{r}^\rho}{q^{m(n-k)(\rho-1)}}$.
\end{lemma}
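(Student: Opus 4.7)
The plan is to chain together three standard approximations, each applicable in the regime where $N$ is huge and $p$ is tiny (which is the case here since $p = 1/(q^{m(n-k)})^\rho$ decreases doubly-exponentially while $N$ grows only as a Gaussian binomial to the $\rho$).

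First I would invoke Lemma~\ref{thm:lemma3} directly: since $N_\rho = 1 + Y$ with $Y$ binomial of parameters $(N,p)$, the event $\{N_\rho = 1\}$ coincides with $\{Y = 0\}$, whose exact probability is $(1-p)^N$. Next, I would justify the Poisson approximation $\prob{Y=0} \approx \prob{Y'=0}$. This is the classical law of rare events: since $N \to \infty$ and $p \to 0$ while $Np = (q^{m(n-k)}-1)\genfrac[]{0pt}{}{n}{r}^\rho / (q^{m(n-k)})^\rho$ stays bounded (in fact small for the chosen parameters), the binomial distribution converges to the Poisson distribution with parameter $\lambda = Np$. In particular $\prob{Y=0} = (1-p)^N \to e^{-Np} = \prob{Y'=0}$, where $Y'$ is Poisson with parameter $Np$.

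Finally, I would expand $e^{-Np}$ to first order. Since $Np$ is very small, $e^{-Np} \approx 1 - Np$. It then remains to simplify
\begin{align*}
Np \;=\; (q^{m(n-k)} - 1)\,\frac{\genfrac[]{0pt}{}{n}{r}^\rho}{(q^{m(n-k)})^\rho} \;\approx\; \frac{\genfrac[]{0pt}{}{n}{r}^\rho}{q^{m(n-k)(\rho-1)}},
\end{align*}
using $q^{m(n-k)} - 1 \approx q^{m(n-k)}$ to collapse one factor of $q^{m(n-k)}$ in the numerator against one in the denominator. Combining the three approximations yields the stated estimate.

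The only delicate point is quantifying the Poisson approximation, but since this is stated as an approximation ($\approx$) rather than an equality and the parameter regime is precisely the textbook setting for the law of rare events, no careful error analysis is needed here; the result is essentially immediate from Lemma~\ref{thm:lemma3} combined with standard asymptotic facts about the binomial distribution.
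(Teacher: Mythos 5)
Your proposal is correct and follows essentially the same route as the paper's proof: Poisson approximation of the binomial $Y$ from Lemma~\ref{thm:lemma3}, then $e^{-Np}\approx 1-Np$, then the simplification $Np\approx \genfrac[]{0pt}{}{n}{r}^\rho/q^{m(n-k)(\rho-1)}$. Your version is in fact slightly more careful, since you make the exact intermediate value $(1-p)^N$ explicit before approximating.
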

\begin{proof}
	For a sufficiently large $N$ and sufficiently small $p$, 
	the binomial distribution of $Y'$ is approximated by the Poisson distribution
	with parameter $\lambda = Np$.
	We can deduce that the probability 
	$
	\prob{N_\rho = 1} 
	\approx 
	\prob{Y' = 0} = e^{-Np}\frac{(Np)^0}{0!}
	\approx 
	e^{-\frac{\genfrac[]{0pt}{}{n}{r}^\rho}{q^{m(n-k)(\rho-1)}}}
	$.
	When $x$ is closed to 0, we have that $e^x \approx 1 - x$, and thus
	$\prob{N_\rho = 1} \approx 1 - \frac{\genfrac[]{0pt}{}{n}{r}^\rho}{q^{m(n-k)(\rho-1)}}$.
	\qed
\end{proof}
Let us call $\epsilon$ the value $1 - \frac{\genfrac[]{0pt}{}{n}{r}^\rho}{q^{m(n-k)(\rho-1)}}$.

\begin{lemma}\label{thm:lemma5}
	If someone is able to solve the problem ${\mathcal P}(H,Y\rho,A,r)$ with probability $\epsilon'$,
	then he is also able to find the secret key of the protocol from the public key 
	with a probability of about $\epsilon\epsilon'$.
\end{lemma}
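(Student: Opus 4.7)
The plan is to show that, whenever the DRDP instance has a unique solution (an event of probability $\epsilon$ by Lemma~\ref{thm:lemma4}) and the solver returns a valid answer (an event of probability $\epsilon'$), the secret key can be extracted in polynomial time. Treating these two events as independent yields the claimed bound of approximately $\epsilon\epsilon'$.

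First, I would observe that an \emph{honest} solution to $\mathcal{P}(H,Y,\rho,A,r)$ always exists: for each $j$, set $z_j = \Gamma_{a_j}(Y)$ and $C = 0$. This satisfies $H \cdot \Gamma_{a_j}(Y)^T - H \cdot z_j^T = 0 = C$ trivially, and each $z_j$ has rank weight $r$ generically, since $\drot_i$ preserves the rank support of $Y$ and the $a_j$ are chosen so that the resulting linear combination does not collapse the rank. Hence this honest tuple always lies in $S_\rho$. By Lemma~\ref{thm:lemma4}, with probability $\epsilon$ it is the \emph{unique} element of $S_\rho$, so any valid solver output must coincide with it.

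Next, by independence of the randomness defining the DRDP instance and the internal randomness of the solver, the adversary outputs exactly $(\Gamma_{a_1}(Y),\ldots,\Gamma_{a_\rho}(Y),0)$ with probability at least $\epsilon\cdot\epsilon'$. From any $\Gamma_{a_j}(Y)$ the secret $Y$ is then recovered as follows: interpreting the two halves of $Y$ as elements of $R = \FF_q[X]/(X^k - 1)$, the operator $\Gamma_{a_j}$ corresponds to multiplication by the polynomial $p_{a_j}(X) = \sum_{i=1}^{k} \alpha_i X^i$. One picks an index $j$ for which $p_{a_j}$ is coprime to $X^k - 1$, inverts $p_{a_j}$ in $R$, and recovers $Y$ in polynomial time. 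In the protocol setting $Y$ plays the role of the low-rank error $e$; the remaining secret component $x$ is then obtained from $y - e = x G$ by solving a linear system over $\FF_{q^m}$.

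The main obstacle will be controlling the invertibility of $p_{a_j}$, since the challenges $a_j$ are adversarial inputs to the reduction rather than being chosen by it. When $X^k - p(X)$ is taken irreducible (as suggested in Section~\ref{sec:id_veron_dc}) every nonzero $p_{a_j}$ is automatically invertible, resolving the issue. Otherwise one would combine several $\Gamma_{a_j}(Y)$ via linear algebra over $R$, using the fact that the $a_j$ are pairwise distinct, to recover $Y$ up to a common factor which can be eliminated using the public-key relation $y = x G + e$. A secondary subtlety is the implicit translation between the generator-matrix view of the protocol and the parity-check view of DRDP, which must be handled cleanly to invoke Lemma~\ref{thm:lemma4} in the right probability space.
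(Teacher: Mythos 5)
Your proposal is correct and takes essentially the same approach as the paper: the paper's own proof is a single sentence that invokes Lemma~\ref{thm:lemma4} for the uniqueness probability $\epsilon$ and multiplies it by the solver's success probability $\epsilon'$. Your version simply makes explicit what the paper leaves implicit --- the existence of the honest solution $z_j=\Gamma_{a_j}(Y)$, $C=0$, the fact that uniqueness forces the solver to output it, and the recovery of the secret by inverting $p_{a_j}$ in $R$ (including the invertibility caveat, which the paper glosses over entirely).
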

\begin{proof}
	We have from Lemma \ref{thm:lemma4} that the probability that the solution of ${\mathcal P}(H,Y\rho,A,r)$ is unique is $\epsilon$.
	\qed
\end{proof}

\begin{theorem}\label{thm:main1}
	If a prover $\prover$ is able to be authenticated by a a verifier $\verifier$ 
	with a probability greater than $\frac{q^k+\rho}{2q^k}$,
	then $\prover$ is able to retrieve the secret key of the protocol from the public key
	with a probability  greater than $1 - \frac{\genfrac[]{0pt}{}{n}{r}^\rho}{q^{m(n-k)(\rho-1)}}$ 
	in polynomial time or
	to find a collision on the underlying hash function in a polynomial time.
\end{theorem}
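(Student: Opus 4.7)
The plan is to combine a heavy-rows counting argument on the two-layer challenge space with a forking/rewinding knowledge extractor, and then to conclude via Lemma~\ref{thm:lemma5}. The challenge space has $2q^k$ pairs $(a,b)$, so a success probability strictly larger than $\frac{q^k+\rho}{2q^k}$ means the prover accepts on strictly more than $q^k+\rho$ of them. Writing $s_a\in\{0,1,2\}$ for the number of $b$'s on which $\prover$ wins at first challenge $a$, and letting $T$ be the number of \emph{heavy} first challenges (those with $s_a=2$), the trivial bound $\sum_a s_a \le T + q^k$ forces $T>\rho$, so at least $\rho+1$ distinct first challenges $a_1,\ldots,a_{\rho+1}$ admit valid responses to both $b=0$ and $b=1$. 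Standard oracle-replay / forking in the random oracle model then lets us, in expected polynomial time, extract for each $a_j$ the two corresponding accepting transcripts, all sharing the same first-round commitments $c_1,c_2$; barring a hash collision (which already satisfies the conclusion of the theorem), $c_1$ fixes a single pair $(P,Q)$ coming from the $b=0$ branches and $c_2$ fixes a single vector $R_1$ coming from the $b=1$ branches.

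Let $\tilde{u}_j$ denote the $b=0$ response and $\rsp_2^{(j,1)}$ the second component of the $b=1$ response for $a_j$. Matching the two preimages of $c_3^{(j)}$ (again, unless we hit a collision) gives $R_1 + \rsp_2^{(j,1)} = \Pi_{P,Q}\bigl(\tilde{u}_j\,G + \Gamma_{a_j}(y)\bigr)$. Setting $U := \Pi_{P,Q}^{-1}(R_1)$ and $E_j := \Pi_{P,Q}^{-1}\bigl(\rsp_2^{(j,1)}\bigr)$, rank preservation yields $\rankweight(E_j)=r$, and rearranging gives $U - \tilde{u}_j\,G = \Gamma_{a_j}(y) - E_j$. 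Multiplying by a parity-check matrix $H$ of the code and using $HG^T=0$ together with $\Gamma_{a_j}(y) = \Gamma'_{a_j}(x)\,G + \Gamma_{a_j}(e)$ from Lemma~\ref{thm:gamma_property}, the codeword terms cancel and we obtain
\begin{align*}
H\,\Gamma_{a_j}(e)^T \,-\, H\,E_j^T \;=\; C, \qquad j=1,\ldots,\rho+1,
\end{align*}
with $C := H\,U^T$ independent of $j$. This is exactly an instance of Problem~\ref{prob:DRDP} on the secret error $Y=e$ with $\rho+1>\rho$ equations, produced by $\prover$ in polynomial time.

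By Lemma~\ref{thm:lemma5}, with probability at least $\epsilon = 1 - \genfrac[]{0pt}{}{n}{r}^\rho / q^{m(n-k)(\rho-1)}$ this DRDP instance admits a unique solution, which must therefore be the honest one $E_j = \Gamma_{a_j}(e)$ and $C=0$; since $\Gamma_a$ acts linearly on $(\FF_{q^m})^n$, the $\rho+1$ relations $\Gamma_{a_j}(e)=E_j$ form a heavily overdetermined linear system from which $e$ is readily recovered, and $x$ is then obtained from $y-e$ via any right inverse of $G$. The principal technical obstacle I expect is the extraction step: turning the existential heavy-rows count into a polynomial-time rewinding extractor that produces the full collection of $\rho+1$ doubly-accepting transcripts sharing the same $c_1,c_2$, despite the small excess success advantage $\frac{1}{2q^k}$ guaranteed by the hypothesis. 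Once that is in place, the remaining arguments are purely algebraic and rely only on Lemma~\ref{thm:lemma5} together with the linearity and rank-preservation properties of $\Pi_{P,Q}$ and $\Gamma_a$.
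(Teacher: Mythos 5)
Your proposal follows essentially the same route as the paper's own proof: a pigeonhole/counting argument showing that more than $\rho$ first challenges $a$ admit accepting answers to both $b=0$ and $b=1$, the collision-or-equality argument from $c_1,c_2,c_3$ forcing a common $(P,Q)$ and $v$ across those challenges, the algebraic reduction via $H\cdot\Gamma_{a_j}(y)^T = H\cdot\Gamma_{a_j}(e)^T$ to an instance of Problem~\ref{prob:DRDP}, and the conclusion via Lemma~\ref{thm:lemma5}. Your treatment is in fact slightly more careful than the paper's in two places it glosses over --- the explicit heavy-rows count giving $\rho+1$ doubly-accepting challenges, and the acknowledgement that a rewinding extractor is needed to actually obtain those transcripts --- but the underlying argument is the same.
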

\begin{proof}
	The prover $\prover$ is able to correctly answer more than $k+\rho$ challenge queries.
	In this case, let us call $a,b$, respectively, the first and the second challenge of $\verifier$.
	First $\prover$ randomly chooses $P \in M_{n,n}(\FF_{q})$, $Q \in M_{m,m}(\FF_{q})$, and
	$v \in (\FF_{q^m})^n$,
	and sends the first two commitments
	$c_1=\hash(P,Q)$ and $c_2=\hash(v)$ to $\verifier$.
	
	We call $c_3$ the second commitment sent to $\verifier$.
	
	We also call $(u_a,P_a,Q_a)$ and $(v_a,z_a)$ the last response, respectively, 
	when $b=0$, and when $b=1$.
	For the Pigeonhole principle, $\prover$ is able to answer 
	to the challenge $(a,b=0)$ and $(a,b=1)$
	for at least $\rho$ different $a$, which we call $a_1, \ldots, a_\rho$.
	$\verifier$ must verify that 
	$\hash(P_{a_j},Q_{a_j}) = c_1$ and 
	$\hash(v_{a_j}) = c_2$.
	Thus, for any $j \in \{1, \dots, \rho\}$, either  $\prover$ finds a collision of the hash function, or 	
	$(P_{a_j},Q_{a_j}) = (P,Q)$ and $v_{a_j} = v$ for all $a_j$.
	$\verifier$ must also verify that the rank weight of $z_{a_j}$ equals $r$ and that the commitment $c_3$ is correct.
	To meet this last condition, 
	the values $P, Q,u_{a_j},v,z_{a_j}$ 
	generated by $\prover$ 
	must satisfy the condition 
	$
	\hash(\Pi_{P,Q}(u_{a_j}G + \Gamma_{a_j}(y))) = 
	\hash(v + z_{a_j})
	$, since both side of the equation must be equal to $c_3$.
	Thus, either  $\prover$ finds a collision of the hash function, or 	
	$\Pi_{P,Q}(u_{a_j}G + \Gamma_{a_j}(y)) = v + z_{a_j}$.
	In this case, we deduce that 
	$
	u_{a_j}G + \Gamma_{a_j}(y) = 
	\Pi_{P,Q}^{-1}(v) + \Pi_{P,Q}^{-1}(z_{a_j})$, 
	and then
	$
	H \cdot  \Gamma_{a_j}(y)^T - H \cdot \Pi_{P,Q}^{-1}(z_{a_j}) = 
	H \cdot \Pi_{P,Q}^{-1}(v)^T
	$.
	Since $H \cdot  \Gamma_{a_j}(y)^T = H \cdot  \Gamma_{a_j}(e)^T $,
	the previous equation corresponds to the problem ${\mathcal P}(H,Y\rho,A,r)$.
	We deduce from Lemma \ref{thm:lemma5} that $\prover$ is able to find the secret key with a probability greater than $\epsilon$.
	\qed
\end{proof}

\begin{theorem}\label{thm:main2}
	If a prover $\prover$ is able to be authenticated by a verifier $\verifier$ 
	with a probability greater than $\left( \frac{q^k+\rho}{2q^k} \right)^N$,
	then $\prover$ is able to retrieve the secret key of the protocol from the public key 
	with a probability greater than $1 - \frac{\genfrac[]{0pt}{}{n}{r}^\rho}{q^{m(n-k)(\rho-1)}}$ 
	in polynomial time or
	to find a collision on the underlying hash function in a polynomial time.
\end{theorem}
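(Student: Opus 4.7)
The plan is to reduce Theorem \ref{thm:main2} to the single-round case already treated in Theorem \ref{thm:main1} via a standard hybrid argument on the $N$ rounds of the protocol. Let $p = \frac{q^k+\rho}{2q^k}$, and for $i=1,\ldots,N$ let $p_i$ denote the probability that $\prover$ succeeds in round $i$ conditioned on having succeeded in rounds $1,\ldots,i-1$ (the probability being taken over $\verifier$'s challenges in round $i$ and $\prover$'s internal randomness). Then the overall success probability factors as $\prod_{i=1}^{N} p_i$, so the hypothesis $\prod_i p_i > p^N$ forces at least one index $i^{*}$ with $p_{i^{*}} > p$ by pigeonhole.

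First I would use rewinding: simulate the first $i^{*}-1$ rounds honestly until we obtain a prefix transcript on which $\prover$ succeeds, then rewind to the beginning of round $i^{*}$ and interact as $\verifier$, sampling fresh first-challenges $a$ and second-challenges $b$. Fixing that prefix and $\prover$'s coins up to round $i^{*}$, the conditional success probability in round $i^{*}$ remains $>p$, so $\prover$'s round-$i^{*}$ strategy fits exactly the hypothesis of Theorem \ref{thm:main1}. Applying Theorem \ref{thm:main1} to round $i^{*}$ yields, in polynomial time, either a collision on $\hash$ or a solution to the underlying Differential Rank Decoding instance, and thus (via Lemma \ref{thm:lemma5}) a recovery of $\sk$ with probability at least $1 - \frac{\genfrac[]{0pt}{}{n}{r}^\rho}{q^{m(n-k)(\rho-1)}}$.

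The step I expect to require the most care is making the rewinding efficient. We do not know $i^{*}$ a priori, but the pigeonhole argument guarantees its existence; a polynomial-time extractor can simply try each $i \in \{1,\ldots,N\}$ in turn, and for each candidate index estimate $p_i$ by polynomially many simulations and invoke the Theorem \ref{thm:main1} extractor whenever the estimate exceeds $p$ by a noticeable margin. A standard heavy-row style analysis (as used for Stern-like protocols in \cite{aguilar2011new,stern1996new}) shows that with at least one value of $i$ this succeeds in expected polynomial time, since the gap $\prod_i p_i - p^N$ lower-bounds the deviation of at least one $p_i$ from $p$.

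Finally, I would observe that the soundness error inside Theorem \ref{thm:main1} is tight: the only way $\prover$ can consistently reach the $p$-threshold in round $i^{*}$ without knowing $\sk$ is by hitting a hash collision, so the dichotomy stated in the theorem (extract $\sk$ \emph{or} break the collision-resistance of $\hash$) transfers verbatim from the single-round statement to the multi-round statement without loss in either the extraction probability $1 - \frac{\genfrac[]{0pt}{}{n}{r}^\rho}{q^{m(n-k)(\rho-1)}}$ or the polynomial-time guarantee. This completes the reduction.
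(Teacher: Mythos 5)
Your proof takes essentially the same route as the paper's: both isolate, by a pigeonhole/averaging argument over the $N$ rounds, a single round in which the (conditional) success probability exceeds $\frac{q^k+\rho}{2q^k}$, and then invoke Theorem~\ref{thm:main1} on that round to extract the secret key or a hash collision. The paper's own proof is only three sentences and leaves implicit the factorization into conditional probabilities, the rewinding, and the heavy-row efficiency analysis that you spell out, so your version is a more detailed rendering of the same argument rather than a different one.
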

\begin{proof}
	$\prover$ is able to build $c_{1,1}, \ldots, c_{1,N}$ and $c_{2,1}, \ldots, c_{2,N}$
	such that it can be authenticated with a probability greater than 
	$\left( \frac{q^k+\rho}{2q^k} \right)^N$.
	For the Pigeonhole principle, we can deduce the existence of an integer $j$ 
	such that $\prover$ can be authenticated by the first protocol 
	with a probability greater than $\frac{q^k+\rho}{2q^k}$.
	Theorem \ref{thm:main1} allows to conclude the proof.
	\qed
\end{proof}

\subsubsection{Zero-Knowledge}
We need to prove that, beside the public parameters, no information can be deduced in polynomial time from an execution of the protocol.

We need to construct a polynomial time simulator $S$ of the protocol that, by interacting with the verifier $V$, provides a transcript which is indistinguishable from the one of the original protocol.

The simulator $S$ should perform the following steps
\begin{itemize}
	\item if $b=0$:
	\begin{itemize}
		\item choose random $P' \in M_{n,n}(\FF_q)$, $Q' \in M_{m,m}(\FF_q)$, and $v \in (\FF_{q^m})^n$;
		\item choose random $a' \in (\FF_{q})^{k}$;
		\item compute $h_1 = \hash(P',Q')$, and $h_3 = \hash(\Pi_{P',Q'}(v \cdot G + \Gamma_{a'}(y)))$.
	\end{itemize}
	Note that $P',Q',v$ are indistinguishable from $P,Q,u+\Gamma_{a'}(x)$;
	\item if $b=1$:
	\begin{itemize}
		\item choose random $P' \in M_{n,n}(\FF_q)$, $Q' \in M_{m,m}(\FF_q)$, $v \in (\FF_{q^m})^n$, and $z \in (\FF_{q^m})^n$ such that $\rankweight(z)=r$;
		\item compute $h_2 = \hash(\Pi_{P',Q'}(v))$, and $h_3 = \hash(\Pi_{P',Q'}(v) + z)$.
	\end{itemize}
	Note that $\Pi_{P',Q'}(v),z$ are indistinguishable from $\Pi_{P,Q}(u \cdot G), \Pi_{P,Q}(\Gamma_{a'}(e))$, since, if $P,Q$ are random matrices, then the function $\Pi_{P,Q}$ can map a vector of a certain rank to any vector of the same rank. Furthermore, the function $\Gamma_{a}$ preserves the rank.
\end{itemize}
The simulator just described runs in polynomial time.

Remark: 
notice that Stern-like schemes as usually presented
are only weak-testable zero knowledge (see \cite[Sect. 3.2]{alamelou2017code}).
They can be straightforwardly turned into a full ZK scheme following \cite[Theorem 3 and 4]{stern1996new}.

\section{Veron Double Circulant Signature schemes in the rank metric}
\label{sec:sign_descr_rvdc}

%
In this section we provide the description of the signature scheme 
derived from Veron identification protocol using double circulant codes (Section \ref{sec:id_veron_dc}),
which we will refer to as \emph{Rank Veron Double Circulant (RVDC) Signature scheme}.
We also consider a version of the scheme with a signature compression, and we refer to it as \emph{compressed Rank Veron Double Circulant (cRVDC)  signature scheme}.

Signature and verification algorithm for the RVDC and cRVDC schemes can be observed, respectively, in Fig.~\ref{fig:rvdcsignverifyalg} and Fig.~\ref{fig:crvdcsignverifyalg}. Key generation is the same as in Sect. \ref{sec:id_veron_dc}.

\restylefloat*{figure}
\begin{figure}[!ht]
	\centering
	\makebox[0pt][c]{%
		\hspace{-5cm}
		\begin{minipage}[t]{0.1\textwidth}
			\vspace{0pt}
			\setlength{\parindent}{0pt}
			\procedure[codesize=\scriptsize]{RVDC: $\sign(\sk, \pk, \msg, \delta)$}
			{%
				\sk = (x,e) \gets \kgen \\
				\pk = (y,G,r) \gets \kgen \\
				\msg \text{, message} \\
				\delta \text{, number of rounds as defined in Sect. \ref{sec:param}} \\
				[0.1\baselineskip][\hline]\< \< \\[-0.5\baselineskip]
				\pcln \pcfor i=1..\delta \pcdo \\
				\pcln \pcind u_i \sample (\FF_{q^m})^k \\
				\pcln \pcind P_i \sample M_{n,n}(\FF_q), Q_i \sample M_{m,m}(\FF_q) \\
				\pcln \pcind c_{i,1} \gets \hash(P_i, Q_i)\\
				\pcln \pcind c_{i,2} \gets \hash(\Pi_{P_i,Q_i}(u_i \cdot G ))\\
				\pcln \cmt_0 \gets c_{1,1} \| c_{1,2} \| \ldots \| c_{\delta,1} \| c_{\delta,2} \\
				\pcln \ch_1 \gets \hash(\cmt_0 \| \msg) \\ \label{step:or1aRVDC}
				\pcln \text{Truncate rightmost bits in } \ch_1 \\
				\pcind[2] \text{ so that it has } \delta k \log_2(q) \text{ bits} \\
				\pcind[2] \text{ and there is no block of length } k \log_2(q) \\
				\pcind[2] \text{ with all equal components over $\FF_q$. } \\ \label{step:or1b}
				%
				\pcln \pcfor i=1..\delta \pcdo \\
				\pcln \pcind a_i \gets \left(\ch_{1,(i-1)k \log_2(q)+1}, \ldots, \ch_{1,i k \log_2(q)}\right)  \\
				\pcln \pcind c_{i,3} \gets \hash \left(\Pi_{P_i,Q_i}\left(u_i \cdot G + \Gamma_{a_i}(e)\right)\right) \\
				\pcln \pcind \cmt_{1} \gets c_{1,3} \| \ldots \| c_{\delta,3} \\
				\pcln \ch_2 \gets \hash(\cmt_1) \\ \label{step:or2}
				\pcln \pcfor i=1..\delta \pcdo \\
				\pcln \pcind \pcif \ch_{2,i} = 0 \\
				\pcln \pcind[2] \rsp_{i,1} \gets (P_i, Q_i)\\
				\pcln \pcind[2] \rsp_{i,2} \gets u_i + \Gamma'_{a_i}(x)\\
				\pcln \pcind \pcif \ch_{2,i} = 1 \\
				\pcln \pcind[2] \rsp_{i,1} \gets \Pi_{P_i,Q_i}(u_i \cdot G)\\
				\pcln \pcind[2] \rsp_{i,2} \gets \Pi_{P_i,Q_i}(\Gamma_{a_i}(e))\\
				\pcln \signature \gets [\cmt_0, \cmt_1, \rsp] \\
				\pcln \pcreturn \signature\\
			}
		\end{minipage}
		\hspace{5cm}
		\begin{minipage}[t]{0.1\textwidth}
			\vspace{0pt}
			\procedure[codesize=\scriptsize]{RVDC: $\verif(\pk, \msg, \delta, \signature)$}
			{%
				\pk = (y,G,r) \gets \kgen \\
				\msg \text{, message} \\
				\delta \text{, number of rounds as defined in Sect. \ref{sec:param}} \\
				\signature = [\cmt_0, \cmt_1, \rsp] \text{, signature} \\
				[0.1\baselineskip][\hline]\< \< \\[-0.5\baselineskip]
				%
				\pcln \ch_1 \gets \hash(\cmt_0 \| \msg) \\ 
				\pcln \ch_2 \gets \hash(\cmt_1) \\ 
				\pcln \pcfor i=1..\delta \pcdo \\
				\pcln \pcind a_i \gets (\ch_{1,(i-1)k \log_2(q)+1}, \ldots, \ch_{1,i k \log_2(q)})  \\
				\pcln \pcind c_{i,3} \gets \cmt_{1,[\hlength(i-1)+1,\ldots,\hlength i]}\\
				\pcln \pcind \pcif \ch_{2,i} = 0 \\
				\pcln \pcind[2] c_{i,1} \gets \cmt_{0,[2\hlength(i-1)+1,\ldots,2\hlength(i-1)+\hlength]}\\
				\pcln \pcind[2] \pcif c_{i,1} \ne \hash(\rsp_{i,1}) \vee \\
				\pcln \pcind[3]c_{i,3} \ne \hash(\Pi_{\rsp_{i,1}}(\rsp_{i,2}\cdot G +  \Gamma_{a_i}(y))\\
				\pcln \pcind[3] \pcreturn \false \\
				\pcln \pcind \pcif \ch_{2,i} = 1 \\
				\pcln \pcind[2] c_{i,2} \gets \cmt_{0,[2\hlength(i-1)+\hlength+1,\ldots,2\hlength i]} \\
				\pcln \pcind[2] \pcif c_{i,2} \ne \hash(\rsp_{i,1}) \vee \\
				\pcln \pcind[3]c_{i,3} \ne \hash(\rsp_{i,1} + \rsp_{i,2}) \vee \rankweight(\rsp_{i,2}) \ne r \\
				\pcln \pcind[3] \pcreturn \false \\
				\pcln \pcreturn \true \\
			}
		\end{minipage}
	}
	\caption{RVDC signature and verification algorithms.}
	\label{fig:rvdcsignverifyalg}
\end{figure}

\restylefloat*{figure}
\begin{figure}[!ht]
	\centering
	\makebox[0pt][c]{%
		\hspace{-5cm}
		\begin{minipage}[t]{0.1\textwidth}
			\vspace{0pt}
			\setlength{\parindent}{0pt}
			\procedure[codesize=\scriptsize]{cRVDC: $\sign(\sk, \pk, \msg, \delta)$}
			{%
				\sk = (x,e) \gets \kgen \\
				\pk = (y,G,r) \gets \kgen \\
				\msg \text{, message} \\
				\delta \text{, number of rounds as defined in Sect. \ref{sec:param}} \\
				[0.1\baselineskip][\hline]\< \< \\[-0.5\baselineskip]
				\pccomment{Step 1} \\
				\pcln \pcfor i=1..\delta \pcdo \\
				\pcln \pcind u_i \sample (\FF_{q^m})^k \\
				\pcln \pcind \seed_i \sample \{0, \ldots, 2^\lambda-1\} \\
				\pcln \pcind P_i \gets \xof(1,\seed_i)_{n^2} \\ 
				\pcln \pcind Q_i \gets \xof(2,\seed_i)_{m^2} \\
				\pcln \pcind c_{i,1} \gets \xof(P_i, Q_i)_{2\lambda}\\
				\pcln \pcind c_{i,2} \gets \xof(\Pi_{P_i,Q_i}(u_i \cdot G ))_{2\lambda}\\
				\pcln \cmt_0 \gets \xof(c_{1,1} \| c_{1,2} \| \ldots \| c_{\delta,1} \| c_{\delta,2})_{2\lambda} \\
				\pccomment{Step 2} \\
				\pcln \ch_1 \gets \xof(\cmt_0 \| \msg)_{\delta k \log_2(q)} \\ \label{step:or1acRVDC}
				%
				\pccomment{Step 3} \\
				\pcln \pcfor i=1..\delta \pcdo \\
				\pcln \pcind a_i \gets \left(\ch_{1,(i-1)k \log_2(q)+1}, \ldots, \ch_{1,i k \log_2(q)}\right)  \\
				\pcln \pcind c_{i,3} \gets \xof \left(\Pi_{P_i,Q_i}\left(u_i \cdot G + \Gamma_{a_i}(e)\right)\right)_{2\lambda} \\
				\pcln \pcind \cmt_{1} \gets c_{1,3} \| \ldots \| c_{\delta,3} \\
				\pccomment{Step 4} \\
				\pcln \ch_2 \gets \xof(\cmt_1)_{2\lambda} \\ 
				\pccomment{Step 5} \\
				\pcln \pcfor i=1..\delta \pcdo \\
				\pcln \pcind \pcif \ch_{2,i} = 0 \\
				\pcln \pcind[2] \rsp_{i,1} \gets u_i + \Gamma'_{a_i}(x) \\
				\pcln \pcind[2] \rsp_{i,2} \gets \seed_i \\
				\pcln \pcind[2] \rsp_{i,3} \gets c_{i,2}\\
				\pcln \pcind \pcif \ch_{2,i} = 1 \\
				\pcln \pcind[2] \rsp_{i,1} \gets \Pi_{P_i,Q_i}(u_i \cdot G)\\
				\pcln \pcind[2] \rsp_{i,2} \gets \text{Coordinates of }\Pi_{P_i,Q_i}(\Gamma_{a_i}(e))\\
				\pcln \pcind[2] \rsp_{i,3} \gets c_{i,1}\\
				\pcln \signature \gets [\cmt_0, \cmt_1, \rsp] \\
				\pcln \pcreturn \signature\\
			}
		\end{minipage}
		\hspace{5cm}
		\begin{minipage}[t]{0.1\textwidth}
			\vspace{0pt}
			\procedure[codesize=\scriptsize]{cRVDC: $\verif(\pk, \msg, \delta, \signature)$}
			{%
				\pk = (y,G,r) \gets \kgen \\
				\msg \text{, message} \\
				\delta \text{, number of rounds as defined in Sect. \ref{sec:param}} \\
				\signature = [\cmt_0, \cmt_1, \rsp] \text{, signature} \\
				[0.1\baselineskip][\hline]\< \< \\[-0.5\baselineskip]
				%
				\pcln \ch_1 \gets \xof(\cmt_0 \| \msg)_{2\lambda} \\ 
				\pcln \ch_2 \gets \xof(\cmt_1)_{2\lambda} \\ 
				\pcln \pcfor i=1..\delta \pcdo \\
				\pcln \pcind a_i \gets (\ch_{1,(i-1)k \log_2(q)+1}, \ldots, \ch_{1,i k \log_2(q)})  \\
				\pcln \pcind c_{i,3} \gets \cmt_{1,[\hlength(i-1)+1,\ldots,\hlength i]}\\
				\pcln \pcind \pcif \ch_{2,i} = 0 \\
				\pcln \pcind[2] P' \gets \xof(1,r_{i,2})_{n^2} \\
				\pcln \pcind[2] Q' \gets \xof(2,r_{i,2})_{m^2} \\ 
				\pcln \pcind[2] c_{i,1} \gets \xof(P',Q')_{2\lambda} \\
				\pcln \pcind[2] c_{i,2} \gets \rsp_{i,3} \\
				\pcln \pcind[2] \pcif c_{i,3} \ne \xof(\Pi_{P',Q'}(r_{i,1} \cdot G +  \Gamma_{a_i}(y))_{2\lambda}\\
				\pcln \pcind[3] \pcreturn \false \\
				\pcln \pcind \pcif \ch_{2,i} = 1 \\
				\pcln \pcind[2] c_{i,1} \gets \rsp_{i,3} \\
				\pcln \pcind[2] c_{i,2} \gets \xof(\rsp_{i,1})_{2\lambda} \\
				\pcln \pcind[2] \pcif c_{i,3} \ne \xof(\rsp_{i,1} + \rsp_{i,2}) \vee \rankweight(\rsp_{i,2})_{2\lambda} \ne r \\
				\pcln \pcind[3] \pcreturn \false \\
				\pcln \pcif \cmt_0 = \xof(c_{1,1} \| c_{1,2} \| \ldots \| c_{\delta,1} \| c_{\delta,2})_{2\lambda} \\
				\pcln \pcind \pcreturn \true \\
			}
		\end{minipage}
	}
	\caption{cRVDC signature and verification algorithms.}
	\label{fig:crvdcsignverifyalg}
\end{figure}

In the algorithm in Fig.~\ref{fig:rvdcsignverifyalg}, if $\delta k \log_2(q)$ is greater than $\hlength$, then it is possible to compute the challenge as $\ch \gets \hash(\cmt \| \msg \| 1) \| \ldots \| \hash(\cmt \| \msg \| l) \in (\FF_2)^{l \cdot \hlength}$,
where
$l \gets \left\lfloor \delta k \log_2(q) / \hlength \right\rfloor + 1$. 
Alternatively, one may use an Extended Output Function (XOF), as shown in Fig.~\ref{fig:crvdcsignverifyalg}, where $\xof(x)_l$ means that we take $l$ bits from the output of the function $\xof$ with input $x$.

\end{document}